\DeclareMathOperator*{\argmin}{arg\,min}
\definecolor{mygreen}{RGB}{28,172,0} 
\definecolor{mylilas}{RGB}{170,55,241}
\definecolor{gray2}{HTML}{ededed}
\definecolor{gray3}{HTML}{F5F5F5}
\definecolor{RoyalAzure}{rgb}{0.0, 0.22, 0.66}
\tikzset{set/.style={draw,circle,inner sep=0pt,align=center}}
  \tikzstyle{abstractbox} = [draw=black, fill=white, rectangle,
\tikzstyle{abstracttitle} =[fill=white]
\def\BState{\State\hskip-\ALG@thistlm}
\newcommand{\K}{\mathrm{k}}
\newcommand{\D}{\mathrm{D}}
\newcommand{\A}{\mathrm{A}}
\newcommand{\supp}{\operatorname{supp}}
\newcommand{\E}{\mathbb{E}}
\newcommand{\pr}{\mathbb{P}}
\newcommand{\N}{\mathbb{N}}
\newcommand{\R}{\mathbb{R}}
\newtheorem{theorem}{Theorem}
\newtheorem{lemma}{Lemma}
\newtheorem{corollary}{Corollary}
\newtheorem{definition}{Definition}
\newtheorem{assumption}{Assumption}
\definecolor{gray2}{HTML}{ededed}
\definecolor{gray3}{HTML}{F5F5F5}
\definecolor{RoyalAzure}{rgb}{0.0, 0.22, 0.66}
\tikzset{set/.style={draw,circle,inner sep=0pt,align=center}}
  \tikzstyle{abstractbox} = [draw=black, fill=white, rectangle,
\tikzstyle{abstracttitle} =[fill=white]
    \pgfmathsetlength\pgfutil@tempdima{\pgfkeysvalueof{/pgf/parallelepiped
      offset x}}
    \pgfmathsetlength\pgfutil@tempdimb{\pgfkeysvalueof{/pgf/parallelepiped
      offset y}}
    \def\ppd@offset{\pgfpoint{\pgfutil@tempdima}{\pgfutil@tempdimb}}
\tikzset{anchor/.append code=\let\tikz@auto@anchor\relax,
  add font/.code=%
    \expandafter\def\expandafter\tikz@textfont\expandafter{\tikz@textfont#1},
  left delimiter/.style 2 args={append after command={\tikz@delimiter{south east}
    {south west}{every delimiter,every left delimiter,#2}{south}{north}{#1}{.}{\pgf@y}}}}
\tikzstyle{sms} = [rectangle callout, draw,very thick, rounded corners, minimum height=20pt]
\tikzset{anchor/.append code=\let\tikz@auto@anchor\relax,
  add font/.code=%
    \expandafter\def\expandafter\tikz@textfont\expandafter{\tikz@textfont#1},
  left delimiter/.style 2 args={append after command={\tikz@delimiter{south east}
    {south west}{every delimiter,every left delimiter,#2}{south}{north}{#1}{.}{\pgf@y}}}}
\tikzstyle{sms} = [rectangle callout, draw,very thick, rounded corners, minimum height=20pt]
\tikzstyle{block} = [rectangle, draw,
\tikzstyle{line} = [draw, -latex]
\tikzset{
  mybackground9/.style={execute at end picture={
        \begin{scope}[on background layer]
          \draw[black,fill=black!5,rounded corners=6ex] (current bounding box.south west)
                    rectangle (current bounding box.north east);
          \node[draw,fill=white,ellipse,anchor=west,inner sep=1pt,minimum width=4ex] at (current bounding box.north
                   west){#1};
        \end{scope}
    }},
}
\tikzset{
  mybackground13/.style={execute at end picture={
        \begin{scope}[on background layer]
          \draw[black, fill=gray2, rounded corners=4ex] (current bounding box.south west)
                    rectangle (current bounding box.north east);
          \node[draw,fill=white,ellipse,anchor=west,inner sep=1pt,minimum width=4ex] at (current bounding box.north
                   west){#1};
        \end{scope}
    }},
}
\tikzset{
  mybackground14/.style={execute at end picture={
        \begin{scope}[on background layer]
          \draw[black, rounded corners=2ex] (current bounding box.south west)
                    rectangle (current bounding box.north east);
          \node[draw,fill=white,ellipse,anchor=west,inner sep=1pt,minimum width=4ex] at (current bounding box.north
                   west){#1};
        \end{scope}
    }},
}
\tikzset{
  mybackground6/.style={execute at end picture={
        \begin{scope}[on background layer]
          \draw[black,rounded corners=1ex, line width=0.15mm] (current bounding box.south west)
                    rectangle (current bounding box.north east);
          \node[draw,fill=white,ellipse,anchor=west,inner sep=1pt,minimum width=4ex] at (current bounding box.north
                   west){#1};
        \end{scope}
    }},
}
\tikzset{
  mybackground11/.style={execute at end picture={
        \begin{scope}[on background layer]
          \draw[black, fill=Black!80!Sepia!9, rounded corners=6ex] (current bounding box.south west)
                    rectangle (current bounding box.north east);
          \node[draw,fill=white,ellipse,anchor=west,inner sep=1pt,minimum width=4ex] at (current bounding box.north
                   west){#1};
        \end{scope}
    }},
}
\tikzset{
  mybackground15/.style={execute at end picture={
        \begin{scope}[on background layer]
          \draw[black, fill=Black!80!Sepia!9, rounded corners=3ex] (current bounding box.south west)
                    rectangle (current bounding box.north east);
          \node[draw,fill=white,ellipse,anchor=west,inner sep=1pt,minimum width=4ex] at (current bounding box.north
                   west){#1};
        \end{scope}
    }},
}
\tikzset{
  mybackground12/.style={execute at end picture={
        \begin{scope}[on background layer]
          \draw[black, fill=Black!40!Emerald!30, rounded corners=3ex, line width=0.3mm] (current bounding box.south west)
                    rectangle (current bounding box.north east);
        \end{scope}
    }},
}
\tikzset{
  mybackground18/.style={execute at end picture={
      \begin{scope}[on background layer]
        \draw[black, fill=gray3, rounded corners=3.5ex] (current bounding box.south west)
        rectangle (current bounding box.north east);
        \node[draw,fill=white,ellipse,anchor=west,inner sep=1pt,minimum width=4ex] at (current bounding box.north
        west){#1};
      \end{scope}
    }}
}
\tikzset{
  mybackground58/.style={execute at end picture={
        \begin{scope}[on background layer]
          \draw[black, fill=blue!40!black!5, rounded corners=1ex] (current bounding box.south west)
                    rectangle (current bounding box.north east);
          \node[draw,fill=white,ellipse,anchor=west,inner sep=1pt,minimum width=4ex, rounded corners=1ex] at (current bounding box.north
                   west){#1};
        \end{scope}
    }},
}
\tikzset{l3 switch/.style={
    parallelepiped,fill=switch, draw=white,
    minimum width=0.75cm,
    minimum height=0.75cm,
    parallelepiped offset x=1.75mm,
    parallelepiped offset y=1.25mm,
    path picture={
      \node[fill=white,
        circle,
        minimum size=6pt,
        inner sep=0pt,
        append after command={
          \pgfextra{
            \foreach \angle in {0,45,...,360}
            \draw[-latex,fill=white] (\tikzlastnode.\angle)--++(\angle:2.25mm);
          }
        }
      ]
       at ([xshift=-0.75mm,yshift=-0.5mm]path picture bounding box.center){};
    }
  },
  ports/.style={
    line width=0.3pt,
    top color=gray!20,
    bottom color=gray!80
  },
  rack switch/.style={
    parallelepiped,fill=white, draw,
    minimum width=1.25cm,
    minimum height=0.25cm,
    parallelepiped offset x=2mm,
    parallelepiped offset y=1.25mm,
    xscale=-1,
    path picture={
      \draw[top color=gray!5,bottom color=gray!40]
      (path picture bounding box.south west) rectangle
      (path picture bounding box.north east);
      \coordinate (A-west) at ([xshift=-0.2cm]path picture bounding box.west);
      \coordinate (A-center) at ($(path picture bounding box.center)!0!(path
        picture bounding box.south)$);
      \foreach \x in {0.275,0.525,0.775}{
        \draw[ports]([yshift=-0.05cm]$(A-west)!\x!(A-center)$)
          rectangle +(0.1,0.05);
        \draw[ports]([yshift=-0.125cm]$(A-west)!\x!(A-center)$)
          rectangle +(0.1,0.05);
       }
      \coordinate (A-east) at (path picture bounding box.east);
      \foreach \x in {0.085,0.21,0.335,0.455,0.635,0.755,0.875,1}{
        \draw[ports]([yshift=-0.1125cm]$(A-east)!\x!(A-center)$)
          rectangle +(0.05,0.1);
      }
    }
  },
  server/.style={
    parallelepiped,
    fill=white, draw,
    minimum width=0.35cm,
    minimum height=0.75cm,
    parallelepiped offset x=3mm,
    parallelepiped offset y=2mm,
    xscale=-1,
    path picture={
      \draw[top color=gray!5,bottom color=gray!40]
      (path picture bounding box.south west) rectangle
      (path picture bounding box.north east);
      \coordinate (A-center) at ($(path picture bounding box.center)!0!(path
        picture bounding box.south)$);
      \coordinate (A-west) at ([xshift=-0.575cm]path picture bounding box.west);
      \draw[ports]([yshift=0.1cm]$(A-west)!0!(A-center)$)
        rectangle +(0.2,0.065);
      \draw[ports]([yshift=0.01cm]$(A-west)!0.085!(A-center)$)
        rectangle +(0.15,0.05);
      \fill[black]([yshift=-0.35cm]$(A-west)!-0.1!(A-center)$)
        rectangle +(0.235,0.0175);
      \fill[black]([yshift=-0.385cm]$(A-west)!-0.1!(A-center)$)
        rectangle +(0.235,0.0175);
      \fill[black]([yshift=-0.42cm]$(A-west)!-0.1!(A-center)$)
        rectangle +(0.235,0.0175);
    }
  },
}
\tikzset{cross/.style={cross out, draw=black, minimum size=2*(#1-\pgflinewidth), inner sep=0pt, outer sep=0pt},
cross/.default={1pt}}
\tikzset{%
  interface/.style={draw, rectangle, rounded corners, font=\LARGE\sffamily},
  ethernet/.style={interface, fill=yellow!50},
  serial/.style={interface, fill=green!70},
  speed/.style={sloped, anchor=south, font=\large\sffamily},
  route/.style={draw, shape=single arrow, single arrow head extend=4mm,
    minimum height=1.7cm, minimum width=3mm, white, fill=switch!20,
    drop shadow={opacity=.8, fill=switch}, font=\tiny}
}
  \def\tikz@shading{cloud}\tikz@addmode{\tikz@mode@shadetrue}}
\tikzset{my cloud/.style={
     cloud, draw, aspect=2,
     cloud color={gray!5!white}
  }
}
\renewcommand\thetable{\arabic{table}}
\begin{document}
\title{\LARGE \bf Conjectural Online Learning with First-order Beliefs \\in Asymmetric Information Stochastic Games
}



\author{ Tao Li, Kim Hammar, Rolf Stadler, and Quanyan Zhu
\thanks{Correspondence should be addressed to T. Li. The online appendix is available at \url{https://arxiv.org/pdf/2402.18781}. This work is partially supported by grants ECCS-1847056.
}
\thanks{T. Li and Q. Zhu are with the Department of Electrical and Computer Engineering, New York University, USA \texttt{\{tl2636, qz494\}@nyu.edu}}
\thanks{
K. Hammar and R. Stadler are with the Division of Network and Systems Engineering, KTH Royal Institute of Technology, Sweden, \texttt{\{kimham, stadler\}@kth.se}
}
}

\maketitle

\begin{abstract}
Asymmetric information stochastic games (\textsc{aisg}s) arise in many complex socio-technical systems, such as cyber-physical systems and IT infrastructures. Existing computational methods for \textsc{aisg}s are primarily offline and can not adapt to equilibrium deviations. Further, current methods are limited to particular information structures to avoid belief hierarchies. {Considering these limitations}, we propose conjectural online learning (\textsc{col}), an online learning method under generic information structures in \textsc{aisg}s. \textsc{col} uses a forecaster-actor-critic (\textsc{fac}) architecture, where subjective forecasts are used to conjecture the opponents' strategies within a lookahead horizon, and Bayesian learning is used to calibrate the conjectures. To adapt strategies to nonstationary environments based on information feedback, \textsc{col} uses online rollout with cost function approximation (actor-critic). We prove that the conjectures produced by \textsc{col} are asymptotically consistent with the information feedback in the sense of a relaxed Bayesian consistency. We also prove that the empirical strategy profile induced by \textsc{col} converges to the Berk-Nash equilibrium, a solution concept characterizing rationality under subjectivity. Experimental results from an intrusion response use case demonstrate \textsc{col}'s {faster convergence} over state-of-the-art reinforcement learning methods against nonstationary attacks.
\end{abstract}
\section{Introduction}
Stochastic game theory provides an analytical framework for automated and resilient management of complex socio-technical systems (\textsc{sts}s) \cite{tao22confluence}, such as cyber-physical systems and IT infrastructures \cite{li2023decision}, where decision-making entities (players) jointly control the system's evolution. Due to the complex nature of \textsc{sts}s and players' distinct capabilities, players have \textit{asymmetric} information structures (feedback), i.e., players acquire different information over time.

Information asymmetry poses a significant computational challenge since each player has to reason recursively about the other players' (opponents') private information. In particular, information asymmetry causes players' beliefs to bifurcate, which means that each player has to form a belief about the opponents' belief (a second-order belief), a belief about the opponents' second-order belief (a third-order belief), etc., leading to infinite belief hierarchies \cite{solan_game}.

To avoid belief hierarchies, prior work focuses on special classes of asymmetric information stochastic games (\textsc{aisg}s) where all players share the same belief, e.g., one-sided partially observable games \cite{horak23posg,kim_posg}, stochastic games with public observations \cite{horak19public-obs,kim_gamesec23}, hidden stochastic games \cite{ziliotto20hiddenSG}, and common-information-based equilibria \cite{gupta13CIB-MPE, ouyang16dyna_asy}. While the assumption that players share the same belief simplifies computations, it is not realistic for most practical scenarios. Moreover, prior work focuses on offline methods for equilibrium computation, which means that the obtained strategies become irrelevant if opponents deviate from the equilibrium path during online execution.

This paper presents an attempt at a unified online learning framework under generic information structures. We study the following question: \textit{how should a player reason about the opponent's private information and update its strategy online in generic \textsc{aisg}s?} To answer this question, we present conjectural online learning (\textsc{col}), an online method where each player a) uses first-order beliefs that admit simple Bayesian updates; and b) conjectures that the opponent's strategy is selected from a candidate set. The belief and the conjecture constitute the player's subjective perception of the game and the learning process, which is calibrated based on information feedback, see \Cref{fig:pipeline}.
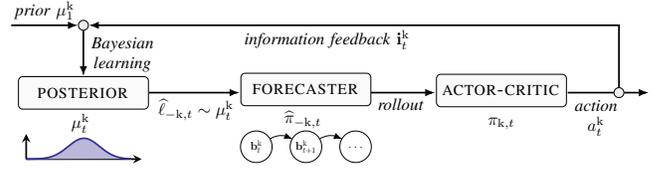
\begin{figure}
  \centering
  \scalebox{0.8}{
    \begin{tikzpicture}

\node[scale=1] (kth_cr) at (5.9,0.46)
{
\begin{tikzpicture}

  \def\B{11};
  \def\Bs{3.0};
  \def\xmax{\B+3.2*\Bs};
  \def\ymin{{-0.1*gauss(\B,\B,\Bs)}};
  \def\h{0.08*gauss(\B,\B,\Bs)};
  \def\N{50}

  \begin{axis}[every axis plot post/.append style={
      mark=none,domain=0:20,
      samples=\N,smooth},
               xmin=0, xmax=20,
               ymin=0, ymax={1.1*gauss(\B,\B,\Bs)},
               axis lines=middle,
               axis line style=thick,
               enlargelimits=upper, 
               ticks=none,
               every axis x label/.style={at={(current axis.right of origin)},anchor=north},
               width=3.5cm,
               height=2cm,
               clip=false
              ]

    \addplot[Blue,thick,name path=B] {gauss(x,\B,\Bs)};
    \path[name path=xaxis](0,0) -- (20,0);
    \addplot[Blue!25] fill between[of=xaxis and B];
  \end{axis}
  \node[inner sep=0pt,align=center, scale=0.8, color=black] (hacker) at (0.95,0.6) {
    $\mu^{\K}_t$
  };

\node[scale=0.9] (box) at (0.95,1.1)
{
\begin{tikzpicture}
  \draw[rounded corners=0.5ex, fill=black!2] (0,0) rectangle node (m1){} (2.4,0.6);
  \node[inner sep=0pt,align=center, scale=1, color=black] (hacker) at (1.25,0.3) {
    \textsc{posterior}
  };
\end{tikzpicture}
};
\end{tikzpicture}
};

\node[scale=1] (kth_cr) at (9.6,0.42)
{
\begin{tikzpicture}

\node[scale=0.75] (box) at (2.75,1.35)
{
\begin{tikzpicture}
\node[draw,circle, minimum width=14mm, scale=0.5](s0) at (0.8,1.3) {\Large$\mathbf{b}^{\K}_{t}$};
\node[draw,circle, minimum width=10mm, scale=0.5](s1) at (1.9,1.3) {\Large$\mathbf{b}^{\K}_{t+1}$};
\node[draw,circle, minimum width=12.7mm, scale=0.55](s3) at (3,1.3) {\Large$\hdots$};

\draw[-{Latex[length=1.7mm]}, bend left] (s0) to (s1);
\draw[-{Latex[length=1.7mm]}, bend left] (s1) to (s3);
\end{tikzpicture}
};

  \node[inner sep=0pt,align=center, scale=0.8, color=black] (hacker) at (2.75,1.8) {
    $\widehat{\pi}_{-\K,t}$
  };

\node[scale=0.9] (box) at (2.75,2.3)
{
\begin{tikzpicture}
  \draw[rounded corners=0.5ex, fill=black!2] (0,0) rectangle node (m1){} (2.4,0.6);
  \node[inner sep=0pt,align=center, scale=1, color=black] (hacker) at (1.25,0.3) {
    \textsc{forecaster}
  };
\end{tikzpicture}
};
\end{tikzpicture}
};

\node[scale=1] (kth_cr) at (12.85,0.73)
{
\begin{tikzpicture}

  \node[inner sep=0pt,align=center, scale=0.8, color=black] (hacker) at (2.8,1.95) {
    $\pi_{\mathrm{k},t}$
  };
\node[scale=0.9] (box) at (2.75,2.5)
{
\begin{tikzpicture}
  \draw[rounded corners=0.5ex, fill=black!2] (0,0) rectangle node (m1){} (2.4,0.6);
  \node[inner sep=0pt,align=center, scale=1, color=black] (hacker) at (1.25,0.3) {
    \textsc{actor-critic}
  };
\end{tikzpicture}
};
\end{tikzpicture}
};

%
%

  \node[inner sep=0pt,align=center, scale=0.8, color=black] (hacker) at (7.82,0.55) {
    $\widehat{\ell}_{-\K, t} \sim \mu_t^{\mathrm{k}}$
  };
  \node[inner sep=0pt,align=center, scale=0.8, color=black] (hacker) at (11.22,0.63) {
    \textit{rollout}
  };
  \node[inner sep=0pt,align=center, scale=0.8, color=black] (hacker) at (14.42,0.415) {
    \textit{action}\\
    $a_t^{\K}$
  };
  \node[inner sep=0pt,align=center, scale=0.8, color=black] (hacker) at (5.31,2.2) {
    \textit{prior} $\mu_1^{\K}$
  };
  \node[inner sep=0pt,align=center, scale=0.8, color=black] (hacker) at (6.55,1.46) {
    \textit{Bayesian}\\
    \textit{learning}
  };
  \node[inner sep=0pt,align=center, scale=0.8, color=black] (hacker) at (10,1.71) {
    \textit{information feedback} $\mathbf{i}^{\K}_t$
  };
  \node[draw,circle, fill=gray2, scale=0.5] (feedback) at (14.8,0.83) {};
  \node[draw,circle, fill=gray2, scale=0.5](prior) at (5.9,1.95) {};
  \draw[-{Latex[length=1.7mm]}, thick] (6.99, 0.83) to (8.5, 0.83);
  \draw[-{Latex[length=1.7mm]}, thick] (10.68, 0.83) to (11.75, 0.83);
  \draw[-, thick] (13.95, 0.83) to (feedback);
  \draw[-{Latex[length=1.7mm]}, thick]  (feedback) to (15.3, 0.83);
  \draw[-{Latex[length=1.7mm]}, thick]  (feedback) to (14.8, 1.95) to (prior);
  \draw[-{Latex[length=1.7mm]}, thick] (4.7, 1.95) to (prior);
  \draw[-{Latex[length=1.7mm]}, thick] (prior) to (5.9, 1.1);
\end{tikzpicture}
  }
  \caption{One-step cycle in \textsc{col}: conjectural online learning (see also Alg. \ref{alg:online_rollout}); the player $\mathrm{k}$ updates its conjecture $\widehat{\ell}_{-\K,t}$ about the opponent's policy parameterization by sampling from the posterior $\mu_{t}^{\K}$, from which it forecasts opponent's future moves $\widehat{\pi}_{-\K,t}$ conditional on its own first-order beliefs $\mathbf{b}_t^{\K}$; a rollout-based actor-critic creates policy improvement against the conjectured opponent.}
  \label{fig:pipeline}
  \vspace{-0.4cm}
\end{figure}

\textsc{col} is based on a forecaster-actor-critic (\textsc{fac}) architecture where the forecaster conjectures the opponent's future strategies by selecting one from the candidate set at each time step. The conjecture is then used to estimate the expected cost (critic) and to update the strategy through rollout (actor). The conjecture is subsequently updated through Bayesian learning based on information feedback.

To quantify the accuracy of a conjecture, we define a notion of \textit{conjecture consistency} based on the \textsc{kl} divergence between the subjective conjecture and the objective information feedback. This consistency notion allows us to characterize the asymptotic convergence of Bayesian learning (see Thm. \ref{thm:asym-consistency}) and to prove that the empirical strategy profile induced by \textsc{col} converges to the Berk-Nash equilibrium \cite{esponda16berk} (see Cor. \ref{coro:convergence-berk-nash}). \textbf{Our contributions} are summarized below.
\begin{enumerate}
    \item We introduce conjectural online learning (\textsc{col}), an online method for \textsc{aisg}s where each player iteratively adapts its conjecture using Bayesian learning and updates its strategy through rollout. \textsc{col} allows a player to adapt its strategy to a nonstationary opponent.
    \item We prove that \textsc{col} asymptotically converges to a Berk-Nash equilibrium in \textsc{aisg}s, where the limiting conjecture is consistent with the information feedback.
    \item We evaluate \textsc{col} on an intrusion response use case based on measurements from a testbed running 64 virtualized servers and 10 different types of intrusions, showing that \textsc{col} {adapts faster to nonstationary attacks than} current reinforcement learning methods.
\end{enumerate}

\section{Asymmetric Information Stochastic Game}
\label{sec:aisg}
Consider an infinite-horizon, discrete-time (indexed by $t\in \mathbb{N}$), finite, stochastic game $\Gamma$ with asymmetric information where players do not perfectly observe the states or actions:
\begin{align}
\Gamma\triangleq\langle \mathcal{N}, \mathcal{S}, \{\mathcal{A}^\K, \mathcal{O}^\K, z^\K, c^\K\}_{\K \in \mathcal{N}}, f, \mathbf{b}_1 , \gamma \rangle. \label{eq:game_def}
\end{align}
$\mathcal{N}$ is the set of players, indexed by $\K\in \mathcal{N}$. $\mathcal{S}$ is the set of states, unobservable to all players. $\mathcal{A}^\K, \mathcal{O}^\K$ are the sets of actions and observations, respectively. All the sets mentioned above are assumed to be finite and endowed with the discrete topology. $z^\K: \mathcal{S} \rightarrow \Delta(\mathcal{O}^\K)$ is the observation kernel, where $\Delta(\cdot)$ is the set of Borel probability measures over the underlying set.  $c^\K: \mathcal{S}\times \prod_{\K\in \mathcal{N}}\mathcal{A}^\K\rightarrow \R $ is the cost function, $f: \mathcal{S}\times  \prod_{\K\in \mathcal{N}}\mathcal{A}^\K\rightarrow \Delta(\mathcal{S})$ is the transition function, and $\mathbf{b}_1\in \Delta(\mathcal{S})$ is the initial state distribution. $\gamma\in [0,1)$ is a discounting factor. 

In addition to the above, some helpful notations are as follows. Elements of the aforementioned sets are denoted by the corresponding lowercase letters. Boldface lowercase letters (e.g., $\mathbf{x}$) denote vectors. A random variable is written in upper case (e.g., $X$), a random vector in boldface (e.g., $\mathbf{X}$). $|\mathcal{S}|$ denotes the cardinality of the set $\mathcal{S}$.

The game begins with a randomly sampled initial state $s_1\sim \mathbf{b}_1$. At each time step, each player observes a private partial observation $o^\K_t\sim z(\cdot|s_t)$. We assume perfect recall, i.e., players perfectly recollect the play history. Denote by $\mathbf{i}_t^{\K}\triangleq\{a_{t-1}^{\K}, o^\K_t\}$ the information feedback revealed to player $\K$ before its decision-making at time $t$ \cite{tao_info}. The history can then be recursively defined as $\mathbf{h}^{\K}_t\triangleq (\mathbf{h}^{\K}_{l-1}, \mathbf{i}_l^{\K} )_{l=2,\hdots,t} \in \mathcal{H}^{\mathrm{k}}$, with $\mathbf{h}^{\K}_1\triangleq\{\mathbf{b}_1, o^\K_1\}$. After observing the partial observation and updating the history, the player selects an action $a_t^\K$ according to its strategy $\pi_\K: \mathcal{H}^\K\rightarrow\Delta(\mathcal{A}^\K)$, which incurs a stage cost $c_t^\K(s_t, a_t^{\K}, a_t^{-\K})$ determined by the joint actions. Finally, the game transitions to a new state $s_{t+1}$, after which the above procedure is repeated.

Each player aims to minimize its expected cumulative cost, defined below
\begin{equation}
\label{eq:objective}
    J_\K^{(\pi_\K, \pi_{-\K})}(\mathbf{b}_1)\triangleq \E\left[\sum_{t=1}^{\infty}\gamma^{t-1}c^{\K}(S_t, A^{\K}_t, A^{-\K}_t)\mid \mathbf{b}_1 \right].
\end{equation}

A strategy $\pi_\K$ is a \textit{best response} against $\pi_{-\K}$ if it minimizes $J_\K^{(\pi_\K, \pi_{-\K})}$. Such a best response may not be unique in general, and hence, we use $\operatorname{BR}_\K(\pi_{-\K})\triangleq \argmin_{\pi_\K}J_\K^{(\pi_\K, \pi_{-\K})}$ to represent the best response correspondence. If each player follows the best response strategy against the opponents' strategies, then no player has an incentive to deviate from their strategy, and the resulting strategy profile is referred to as a Nash equilibrium, defined as
\begin{equation}
\label{eq:nash}
   \bm{\pi}^\star= (\pi_\K^\star, \pi_{-\K}^\star)\in \operatorname{BR}_\K(\pi_{-\K}^\star)\times \operatorname{BR}_{-\K}(\pi^\star_{\K}).
\end{equation}

\subsection{Asymmetric Information and Belief Hierarchy}
\label{subsec:nested-beliefs}
Player $\K$'s belief about the hidden state at time $t$ is the conditional probability of the underlying state given the realized history: $\mathbf{b}^\K_t\triangleq \pr[s_t|\mathbf{h}_t^\K]$. The belief state $\mathbf{b}_t^\K$ is a sufficient statistic for $\mathbf{h}_t^\K$ \cite{kumar82stochastic}. Consequently, the strategy can be defined on the belief space $\mathcal{B}\triangleq \Delta(S)$, i.e., $\pi_\K:\mathcal{B}\rightarrow \Delta(\mathcal{A}^\K)$. For simplicity, we consider the two-player case and refer to player $-\K$ as the opponent in the sequel.

Suppose that the opponent is aware of player $\K$'s belief state and uses the same belief to determine its actions as $\pi_{-\K}(\cdot| \mathbf{b}_t^\K)$, then each player can compute $\mathbf{b}_t^\K$ recursively through \eqref{eq:belief-update}. While this assumption makes the computation of $\mathbf{b}_t^\K$ tractable, it is unrealistic since, in practice, each player has their own private information, leading to separate belief states for each player. That is, even though players share the initial belief state $\mathbf{b}_1^{\K}=\mathbf{b}_1$, their belief evolutions bifurcate as the information feedbacks $\mathbf{i}_t^\K$ and $\mathbf{i}_t^{-\K}$ differ. For this reason, each player also has a belief over the opponents' beliefs, leading to second-order beliefs from $\Delta(\mathcal{S}\times \mathcal{B})$. The second-order beliefs also bifurcate, leading to third-order beliefs and so on, creating infinite hierarchies of beliefs.
\begin{figure*}[!t]
\small
\begin{equation}
\label{eq:belief-update}
\mathscr{B}(\mathbf{b}^\K_{t-1}, \mathbf{i}_t^\K, \pi_{-\K})(s_t=s) \triangleq \frac{z^\K(o^\K_{t} \mid s_t=s)\sum_{\tilde{s} \in \mathcal{S}}\sum_{a \in \mathcal{A}^{-\K}}\pi_{-\K}(a^{-\K}_{t-1}=a \mid \mathbf{b}^\K_{t-1})\mathbf{b}^\K_{t-1}(s_{t-1}=\tilde{s})f(s \mid \tilde{s}, a^{\K}_{t-1},a^{-\K}_{t-1})}{\sum_{a \in \mathcal{A}^{-\K}}\sum_{s^{\prime},\tilde{s}\in \mathcal{S}}z^\K(o^\K_{t}\mid s_t=s^{\prime})\pi_{-\K}(a^{-\K}_{t-1}=a \mid  \mathbf{b}^\K_{t-1})\mathbf{b}^{\K}_{t-1}(s_{t-1}=\tilde{s})f(s^{\prime} \mid \tilde{s}, a^{\K}_{t-1},a^{-\K}_{t-1})}.
\end{equation}
\vspace{-0.7cm}
\end{figure*}

Even though prior works bear distinct motivations and theoretical treatments, most of them focus on subclasses of \textsc{aisg} where a common belief is shared among all players, allowing for the avoidance of belief hierarchies. For example, \cite{horak23posg,kim_posg} considers one-sided partial observability, where one informed player, say $\K$, can observe the state, the other players' observations, as well as the other player's actions, i.e., it has full observability. The information structures in this type of game are $\mathbf{i}_t^\K=\{a_{t-1}^\K, a_{t-1}^{-\K}, o_{t}^{-\K}, s_t\}$ and $\mathbf{i}_t^{-\K}=\{a_{t-1}^{-\K}, o_t^{-\K}\}$. The informed player reconstructs the belief of the lesser informed, and thus both players share the same belief. Following a similar approach, \cite{horak19public-obs,ziliotto20hiddenSG} assume public observations that are shared across players, i.e., $\mathbf{i}_t^\K=\{o_t^\K, a_{t-1}^{\K}\}$, $o_t^\K=o_t^{-\K}=o_t$. In this type of game, the public observations enable each player to reconstruct the opponent's belief. Consequently, players reach a consensus over the joint belief state $(\mathbf{b}_t^\K, \mathbf{b}_t^{-\K})$. In a separate line of work, \cite{gupta13CIB-MPE,ouyang16dyna_asy} explore common-information-based beliefs. The common information refers to the non-empty intersection of information structures $\mathbf{i}_t^{c}\triangleq \cap_{\K\in \mathcal{N}}\mathbf{i}_t^{\K}$, which subsequently leads to common beliefs among players given that only $\mathbf{i}_t^{c}$ is used to form beliefs. The typical solution method for the \textsc{aisg}s described above is dynamic programming aided by heuristic search \cite{saffidine23hsvi}. In generic \textsc{aisg}s, however, this solution method can not be applied due to belief hierarchies.  
\subsection{Subjective Rationality}
{
To avoid handling belief hierarchies, \textsc{col} takes inspiration from subjective rationality \cite{ryall03subjective}, which mandates the player to behave optimally given available information feedback and (possibly incorrect) subjective perception of the environment. In addition to Berk-Nash equilibrium (\Cref{def:berk-nash}) \cite{esponda16berk} considered in this work, many other equilibrium notions have been developed to characterize rationality under subjectivity. Among them, conceptually related to \textsc{col} are self-confirming equilibrium \cite{self-confirming, tao23sce} and subjective equilibrium \cite{kalai95subjective,arslan23subjective}, which require players' subjective conjecture to coincide with objective probability over realizable history. In contrast, \textsc{col} uses a relaxed consistency notion that only requires the equilibrium subjective conjecture to be the closest to the objective one among all admissible conjectures.     
}
\section{Conjectural Online Learning}
Conjectural online learning (\textsc{col}) is based on a forecaster-actor-critic (\textsc{fac}) architecture where the forecaster first conjectures the opponent's strategy $\widehat{\pi}_{-\K, t}$ to be used within a lookahead horizon $\ell_{\K}$ at each time step, against which the critic evaluates the player $\K$'s previous strategy $\pi_{\K,t-1}$ by estimating the value function $\widehat{J}_\K^{(\bm{\pi}_t)}$, $\bm{\pi}_t=(\pi_{\K,t-1}, \widehat{\pi}_{-\K,t})$. Finally, the actor updates the strategy through a $\ell_\K$-step rollout operation as
\begin{align}
        &\pi_{\K,t}(\mathbf{b}_t^\K)=\mathscr{R}(\widehat{\pi}_{-\K, t}, \mathbf{b}_t^\K, \widehat{J}_\K, \ell_\K)\triangleq \argmin_{a_t^{(\mathrm{k})}, a^{(\mathrm{k})}_{t+1},\hdots,a^{(\mathrm{k})}_{t+\ell_{\mathrm{k}}-1}} \label{eq:rollout} \\ &\mathbb{E}_{\bm{\pi}_t}\left[\sum_{j=t}^{t+\ell_{\mathrm{k}}-1}\gamma^{j-t}c^{\mathrm{k}}(S_j, A_j^{\K}, A_j^{-\K}) + \gamma^{\ell_{\mathrm{k}}} \widehat{J}_{\mathrm{k}}(\mathbf{B}^\K_{t+\ell_{\mathrm{k}}}) \mid \mathbf{b}^\K_t\right],\nonumber
\end{align}
where the random vector $\mathbf{B}^\K_{t+\ell_{\mathrm{k}}}$ denotes the reachable belief state at the future time step $t+\ell_\K$ under the strategy profile $\bm{\pi}_t$. Its realization depends on the actions, the conjecture $\widehat{\pi}_{-\K,t}$, and the Bayesian update \eqref{eq:belief-update}. 

When $\pi_{\K,t}$ is obtained from \eqref{eq:rollout}, the player executes the strategy, and the game moves to the next state, sending out new observations that update the players' beliefs through \eqref{eq:belief-update}. The forecaster then adapts the conjecture online using the information feedback $\mathbf{i}_t^\K$ through Bayesian learning, which ensures that its conjecture is asymptotically consistent with the observation. Subsequently, the actor and the critic repeat the rollout \eqref{eq:rollout} based on the updated conjecture $\widehat{\pi}_{-\K,t}$.

The pseudocode of \textsc{col} is listed in Alg.~\ref{alg:online_rollout}, and the main components of \textsc{col} are described below.
\begin{algorithm}
  \caption{\textbf{C}onjectural \textbf{O}nline \textbf{L}earning.}\label{alg:online_rollout}
\scriptsize
  \SetNoFillComment
  \SetKwProg{myInput}{Input:}{}{}
  \SetKwProg{myOutput}{Output:}{}{}
  \SetKwProg{myalg}{Algorithm}{}{}
  \SetKwProg{myproc}{Procedure}{}{}
  \SetKw{KwTo}{inp}
  \SetKwFor{Forp}{in parallel for}{\string do}{}%
  \SetKwFor{Loop}{Loop}{}{EndLoop}
  \DontPrintSemicolon
  \SetKwBlock{DoParallel}{do in parallel}{end}
  \myInput{
    \upshape Initial belief $\mathbf{b}_1$, game model $\Gamma$, base strategies $\bm{\pi}_{1}\triangleq (\pi_{\K,1}, \pi_{-\K,1})$, priors $(\mu^\K_1, \mu^{-\K}_1)$, lookahead horizons $\ell_{\mathrm{\K}},\ell_{-\mathrm{k}}$.
  }{}
  \myOutput{
    \upshape A sequence of action profiles $\mathbf{a}_{1},\mathbf{a}_{2}, \hdots$.
  }{}

  \myalg{}{
    \tcc{Initialization}
    $s_1 \sim \mathbf{b}_1$, $\mathbf{b}_1^{\K}=\mathbf{b}_1$, $\mathbf{h}^{\K}_1 \leftarrow (\mathbf{b}^\K_1)$, $\widehat{\pi}_{\K, 1} \leftarrow \pi_{\K, 1}$\;
    $a^{\K}_{1} \sim \pi_{\K,1}(\mathbf{b}^{\K}_1)$\;
    $s_{2} \sim f(\cdot \mid s_1, a^{\K}_{1}, a^{-\K}_{1})$\;
    \For{$t=2,3,\hdots$}{
      \tcc{Information feedback}
      $o^\K_t \sim z^\K(\cdot \mid s_t)$, $\mathbf{i}^{\K}_{t} \leftarrow (o^\K_{t}, a_{t-1}^\K)$, $\mathbf{h}^{\K}_{t} \leftarrow (\mathbf{h}^{\K}_{t-1}, \mathbf{i}^{\K}_{t})$\;
      $\mathbf{b}_{t} \leftarrow \mathscr{B}(\mathbf{b}_{t-1}^\K, \mathbf{i}_t^{\K}, \widehat{\pi}_{-\K,t-1})$ \;

        \tcc{Bayesian Forecaster}
      Update $\mu^{\K}_t$ via (\ref{eq:bayesian-learning}) and set $\widehat{\ell}_{-\K, t}\sim \mu_t^{\K}$\;

      \tcc{Conjectural Critic}
      Estimate $\widehat{J}_{-\K}^{(\pi_{\K,t-1}, \widehat{\pi}_{-\K,t-1})}$ \;
      Compute the conjectured opponent's strategy $\widehat{\pi}_{-\K, t}(\mathbf{b}^{\K}_t) \in \mathscr{R}(\pi_{\K, t-1}, \mathbf{b}_t^{\K}, \widehat{J}_{-\K}^{(\pi_{\K,t-1}, \widehat{\pi}_{-\K,t-1})}, \widehat{\ell}_{-\K, t})$\;
      Estimate $\widehat{J}_{\K}^{(\pi_{\K,t-1}, \widehat{\pi}_{-\K,t})}$\;

      \tcc{Actor}
      $\pi_{\K,t}(\mathbf{b}_t) \in \mathscr{R}(\widehat{\pi}_{-\K,t}, \mathbf{b}^{\K}_t, \widehat{J}_{\K}^{(\pi_{\K,t-1}, \widehat{\pi}_{-\K,t})}, \ell_{\K})$\;

    $a^{\K}_{t} \sim \pi_{\K,t}(\mathbf{b}^{\K}_t)$\;
    $s_{t+1} \sim f(\cdot \mid s_t, a^{\K}_{t}, a^{-\K}_{t})$
    }
  }
  \normalsize
\end{algorithm}

\subsection{Bayesian Forecaster and Consistent Conjecture}
The Bayesian forecaster in \textsc{col} begins with a prior probability measure $\mu^\K_1$ over a set of candidate opponent strategies $\widehat{\pi}_{-\K}\in \Pi_{-\K}\triangleq \{\pi| \pi: \mathcal{B}\rightarrow \Delta(\mathcal{A}^{-\K}) \}$. We assume that the opponent's strategy is parameterized by $\ell_{-\K}$, which means that it suffices to conjecture the parameter from $\K$'s candidate set: $\widehat{\ell}_{-\K}\in \Theta_\K$. We remark that the set of candidate strategies can be obtained from opponent modeling \cite{shen21opponent-modeling-review} or prior knowledge of the opponent. For example, suppose that both players employ rollout strategies, then the parameter set $\Theta_\K$ simply includes all possible lookahead horizons that the opponent could use (this is the parameterization used in Alg. \ref{alg:online_rollout}). Note that the candidate set may not include the actual opponent's strategy, meaning that the player's subjective modeling of its opponent can be misspecified \cite{esponda16berk}.

Upon receiving the information feedback $\mathbf{i}_t^{\K}$, the forecaster calculates the Bayesian posterior through \eqref{eq:bayesian-learning}, from which a new conjecture is sampled $\widehat{\ell}_{-\K, t}\sim \mu_t^\K$.
\begin{align}
\label{eq:bayesian-learning}
    \mu^{\K}_{t}({\widehat{\ell}})\triangleq \frac{\mathbb{P}[\mathbf{i}^{\mathrm{k}}_{t}\mid{\widehat{\ell}}, \mathbf{b}^\K_{t-1}]\mu_{t-1}^{\K}({\widehat{\ell}})}{\sum_{\overline{\ell} \in \Theta_{\mathrm{k}}}\mathbb{P}[\mathbf{i}^{\mathrm{k}}_{t}\mid\overline{\ell},\mathbf{b}^\K_{t-1}]\mu^{\K}_{t-1}(\widehat{\ell})},
\end{align}
where $\mathbb{P}[\mathbf{i}^{\mathrm{k}}_{t}\mid{\widehat{\ell}}, \mathbf{b}^\K_{t-1}]$ is the conditional probability of observing $\mathbf{i}_t^\K$ given the conjecture $\widehat{\ell}$ selected from the candidate set and the current belief state. For $\mathbf{i}^{\mathrm{k}}_{t}=\{o_t^\K, a_{t-1}^\K\}$, the conditional probability is given by
  \begin{align}
    &\mathbb{P}[\mathbf{i}^{\mathrm{k}}_{t}\mid{\widehat{\ell}}, \mathbf{b}^\K_{t-1}]=\sum_{a}\sum_{s, \tilde{s}} z^{\K}(o_t^\K|s)\widehat{\pi}_{-\K, t-1}(a_{t-1}^{-\K}=a|\mathbf{b}_{t-1}^\K)\nonumber \\
    &\times  \pi_{\K,t-1}(a_{t-1}^\K|\mathbf{b}_{t-1}^\K)f(s|\tilde{s}, a_{t-1}^\K, a_{t-1}^{-\K} ) \mathbf{b}_{t-1}^\K(\tilde{s}).   \label{eq:sub-cond-prod}
\end{align}
 (\ref{eq:bayesian-learning}) is well-defined under the following assumption. The first two are standard in parametric statistics \cite{esponda16berk}, while the last one ensures that players' conjecture sets are rich enough to make objective events also subjectively realizable.   
\begin{assumption}
    \label{assumption:bayes}
    (\textit{i}) $\Theta_{\K}$ is finite subset of an Euclidean space;  (\textit{ii}) $\mu_1^{\K}$ has full support; and (\textit{iii})  for all feasible $(\mathbf{i}^{\mathrm{k}},\mathbf{b}^{\K}_t)$, there exists $\widehat{\ell} \in \Theta_{\mathrm{k}}$ such that
    $\mathbb{P}[\mathbf{i}^{\K}|\widehat{\ell}, \mathbf{b}^{\K}_t]>0$.
\end{assumption}

One natural question arises: does the Bayesian posterior concentrate on the actual opponent's strategy when $\ell_{-\K}\in\Theta_{\K}$, achieving Bayesian consistency \cite{schwartz65consistency}? The answer is negative due to the belief bifurcation: $\mathbf{b}_t^\K \neq \mathbf{b}_t^{-\K}$. In \textsc{col}, each player $\K$ computes the conditional probability in \eqref{eq:sub-cond-prod} using its conjecture $\widehat{\pi}_{-\K,t-1}$ parameterized by $\widehat{\ell}_{-\K, t-1}$, which acts on its own belief state $\mathbf{b}_{t-1}^\K$. Hence, $\mathbb{P}[\mathbf{i}^{\mathrm{k}}_{t}\mid{\widehat{\ell}_{-\K,t-1}}, \mathbf{b}^\K_{t-1}]$ is a subjective conditional probability. In stark contrast, the objective conditional probability under the actual strategy $\pi_{-\K,t-1}$ (parameterized by $\ell_{-\K, t-1}$) is
    \begin{align}
    &\mathbb{P}[\mathbf{i}^{\mathrm{k}}_{t}\mid \ell_{-\K, t-1}, \mathbf{h}_t^\K\cup \mathbf{h}_t^{-\K}]=\mathbb{P}[\mathbf{i}^{\mathrm{k}}_{t}\mid \ell_{-\K, t-1}, \mathbf{b}^\K_{t-1}, \mathbf{b}^{-\K}_{t-1}]\nonumber \\
    &=\sum_{a}\sum_{s, \tilde{s}} z^{\K}(o_t^\K|s){\pi}_{-\K, t-1}(a_{t-1}^{-\K}=a|\mathbf{b}_{t-1}^{-\K}) \\
    &\times  \pi_{\K,t-1}(a_{t-1}^\K|\mathbf{b}_{t-1}^\K)f(s|\tilde{s}, a_{t-1}^\K, a_{t-1}^{-\K} ) \mathbf{b}_{t-1}^\K(\tilde{s}). \nonumber
\end{align}
Therefore, the Bayesian learning \eqref{eq:bayesian-learning} assigns more probability mass to conjectures under which $\mathbf{i}_t^\K$ is more likely to be observed under the player's subjective belief state, which may deviate from its opponent's. Consequently, even though some conjecture $\widehat{\ell}$ may well approximate the actual $\ell$, it won't be chosen if the strategy under the subjective belief state $\mathbf{b}_{t-1}^\K$ is less likely to induce $\mathbf{i}_t^\K$ than other parameters.

One may wonder what asymptotic behaviors \eqref{eq:bayesian-learning} displays if the traditional Bayesian consistency does not hold. We propose a new consistency metric based on the Kullback-Leibler (\textsc{kl}) divergence between the subjective and objective conditional probability. Referring to the parameter that minimizes the \textsc{kl} divergence as the consistent conjecture, we prove that the posterior concentrates on these consistent conjectures under Bayesian learning.

Denote by $\mathbf{b}_t\triangleq (\mathbf{b}_t^\K, \mathbf{b}_t^{-\K})\in \Delta(\mathcal{S})\times \Delta(\mathcal{S})\triangleq \mathcal{B}^2$ the joint belief state. Define a product measure $\bm{\nu}=(\nu^\K, \nu^{-\K})$ over the joint belief space, where $\nu^\K, \nu^{-\K}\in \Delta(\mathcal{S})$ are the occupancy measures. The \textsc{kl} divergence between the subjective and objective probability is defined as
\begin{equation}
    K(\widehat{\ell}_{-\K},\bm{\nu})\triangleq \E_{\mathbf{b}\sim \bm{\nu}}\E_{\mathbf{I}^\K}\left[\ln\left(\frac{\mathbb{P}[\mathbf{I}^{\mathrm{k}} \mid \ell_{-\K}, \mathbf{b}]}{\mathbb{P}[\mathbf{I}^{\mathrm{k}} \mid \widehat{\ell}_{-\K}, \mathbf{b}]}\right)\right],\label{eq:kl}
\end{equation}
where the information feedback $\mathbf{I}^\K$ follows the objective distribution $\mathbb{P}[\mathbf{I}^{\mathrm{k}} | \ell_{-\K}, \mathbf{b}]$. Note that (\ref{eq:kl}) uses the objective distribution as the reference and evaluates the deviation of the subjective $\mathbb{P}[\mathbf{I}^{\mathrm{k}} \mid \widehat{\ell}_{-\K}, \mathbf{b}]$. While we include the opponent's belief $\mathbf{b}^{-\K}$ in $\mathbb{P}[\mathbf{I}^{\mathrm{k}} \mid \widehat{\ell}_{-\K}, \mathbf{b}]$ for formality, it follows \eqref{eq:sub-cond-prod} and is independent of $\mathbf{b}^{-\K}$.

Given an occupancy measure $\bm{\nu}$, any conjecture $\widehat{\ell}$ that minimizes \eqref{eq:kl} is referred to as a \textit{consistent conjecture}. We denote the set of consistent conjectures by $\Theta_\K^\star(\bm{\nu})\triangleq \argmin_{\widehat{\ell}_{-k}\in \Theta_\K} K(\widehat{\ell}_{-\K},\bm{\nu})$, and the minimal divergence is denoted by $K_{\Theta_{\K}}^\star(\bm{\nu})$. Our consistency notion (\ref{eq:kl}) is different from Bayesian consistency, which requires that the posterior concentrates on the neighborhoods of the true parameter $\ell_{-\K}$\cite{schwartz65consistency}. As a relaxation, (\ref{eq:kl}) shifts focus from the parameter to the observation generation process: conjectures that induce subjective distributions closest to the objective distribution are said to be consistent, regardless of their distance to the true parameter. The following theorem states that the conjectures produced by Alg. \ref{alg:online_rollout} are asymptotically consistent with respect to the empirical occupancy measure $\bm{\nu}_{\mathbf{h}_t}\triangleq \frac{1}{t}\sum_{\tau=1}^{t}\mathds{1}_{\{\mathbf{b}\}}(\mathbf{b}_{\tau})$, and $\bm{\pi}_{\mathbf{h}_t}$ is the empirical strategy profile.
\begin{theorem}
\label{thm:asym-consistency}
    For any sequence $(\bm{\pi}_{\mathbf{h}_t}, \bm{\nu}_{\mathbf{h}_t})_{t\geq 1}$ from Alg.~\ref{alg:online_rollout},
    \begin{equation}
    \label{eq:asym-consistency}
        \lim_{t\rightarrow  \infty} \sum_{\widehat{\ell}_{-\K}\in\Theta_\K} (K(\widehat{\ell}_{-\K}, \bm{\nu}_{\mathbf{h}_t})-K^\star_{\Theta_\K}(\bm{\nu}_{\mathbf{h}_t}))\mu^\K_{t+1}(\widehat{\ell}_{-\K})=0,
    \end{equation}
    a.s.-$\mathbb{P}^{\mathscr{B},\mathscr{R}}$, where $\mathbb{P}^{\mathscr{B},\mathscr{R}}$ denotes the probability measure over the set of realizable histories $\mathbf{h}_t$ induced by $(\bm{\pi}_{\mathbf{h}_t})_{t\geq 1}$ under the rollout ($\mathscr{R}$) and Bayesian belief update ($\mathscr{B}$) in Alg.~\ref{alg:online_rollout}.
\end{theorem}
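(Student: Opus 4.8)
The plan is to turn the Bayesian recursion \eqref{eq:bayesian-learning} into an exact decomposition of the log-posterior in which the nonstationarity of the self-generated play is absorbed into the empirical occupancy measure $\bm{\nu}_{\mathbf{h}_t}$, so that the only probabilistic input is a martingale strong law. First I would unroll \eqref{eq:bayesian-learning} from the prior $\mu_1^\K$: writing $L_\tau(\widehat{\ell})\triangleq\ln\pr[\mathbf{i}^\K_{\tau+1}\mid\widehat{\ell},\mathbf{b}^\K_\tau]$, repeated substitution gives
\[ \mu^\K_{t+1}(\widehat{\ell})=\frac{\mu^\K_1(\widehat{\ell})\,\exp\!\big(\textstyle\sum_{\tau=1}^t L_\tau(\widehat{\ell})\big)}{\sum_{\overline{\ell}\in\Theta_\K}\mu^\K_1(\overline{\ell})\,\exp\!\big(\textstyle\sum_{\tau=1}^t L_\tau(\overline{\ell})\big)}. \]
Let $(\mathcal{F}_\tau)_{\tau\ge1}$ be the filtration generated by the entire play history (both players' beliefs, actions, observations, and posteriors) through time $\tau$, so $\mathbf{b}_\tau=(\mathbf{b}^\K_\tau,\mathbf{b}^{-\K}_\tau)$ and all strategies at step $\tau$ are $\mathcal{F}_\tau$-measurable, whereas $L_\tau(\widehat{\ell})$ is $\mathcal{F}_{\tau+1}$-measurable. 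Using the objective feedback law introduced before \eqref{eq:kl}, I would verify the exact identity $\E[L_\tau(\widehat{\ell})\mid\mathcal{F}_\tau]=-H_\tau-D_\tau(\widehat{\ell})$, where $H_\tau$ is the ($\widehat{\ell}$-independent, bounded) Shannon entropy of the objective law of $\mathbf{i}^\K_{\tau+1}$ given $\mathcal{F}_\tau$ and $D_\tau(\widehat{\ell})\ge0$ is the \textsc{kl} divergence of the subjective law $\pr[\,\cdot\mid\widehat{\ell},\mathbf{b}^\K_\tau]$ from that objective law. The key bookkeeping point is that, by the very definition of the empirical occupancy measure, $\frac1t\sum_{\tau=1}^t D_\tau(\widehat{\ell})=K(\widehat{\ell},\bm{\nu}_{\mathbf{h}_t})$ with no approximation; this is exactly why the theorem is stated with $\bm{\nu}_{\mathbf{h}_t}$ inside $K$, and it makes the argument robust to the nonstationarity of $(\bm{\pi}_{\mathbf{h}_t},\bm{\nu}_{\mathbf{h}_t})$.

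Next I would write $\sum_{\tau=1}^t L_\tau(\widehat{\ell})=M_t(\widehat{\ell})+\sum_{\tau=1}^t\E[L_\tau(\widehat{\ell})\mid\mathcal{F}_\tau]$ with $M_t(\widehat{\ell})\triangleq\sum_{\tau=1}^t\big(L_\tau(\widehat{\ell})-\E[L_\tau(\widehat{\ell})\mid\mathcal{F}_\tau]\big)$ a mean-zero $(\mathcal{F}_t)$-martingale, and substitute into the posterior ratio; the common factor $\exp(-\sum_{\tau=1}^t H_\tau)$ cancels, leaving
\[ \mu^\K_{t+1}(\widehat{\ell})=\frac{\mu^\K_1(\widehat{\ell})\,\exp\!\big(M_t(\widehat{\ell})-t\,K(\widehat{\ell},\bm{\nu}_{\mathbf{h}_t})\big)}{\sum_{\overline{\ell}\in\Theta_\K}\mu^\K_1(\overline{\ell})\,\exp\!\big(M_t(\overline{\ell})-t\,K(\overline{\ell},\bm{\nu}_{\mathbf{h}_t})\big)}. \]
Under a mild regularity condition---say, the observation kernels $z^\K$ have full support and the rollout \eqref{eq:rollout} selects actions deterministically---formula \eqref{eq:sub-cond-prod} yields the uniform bound $\pr[\mathbf{i}^\K\mid\widehat{\ell},\mathbf{b}^\K]\ge\min_{o,s}z^\K(o\mid s)>0$ for every $\widehat{\ell}$ and $\mathbf{b}^\K$; hence all $K(\widehat{\ell},\cdot)$ are finite and uniformly bounded by some $\overline{d}<\infty$, and the increments of $M_t(\widehat{\ell})$ are bounded. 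The martingale strong law then gives $\frac1t M_t(\widehat{\ell})\to0$ a.s., and since $\Theta_\K$ is finite this is uniform, so $\eta_t\triangleq\frac1t\max_{\overline{\ell}\in\Theta_\K}|M_t(\overline{\ell})|\to0$ a.s. Bounding the denominator from below by the single term of a minimizer $\ell^\star_t\in\Theta^\star_\K(\bm{\nu}_{\mathbf{h}_t})$---which carries prior mass at least $\underline{\mu}\triangleq\min_{\overline{\ell}}\mu^\K_1(\overline{\ell})>0$ by Assumption~\ref{assumption:bayes}---gives, with $d_t(\widehat{\ell})\triangleq K(\widehat{\ell},\bm{\nu}_{\mathbf{h}_t})-K^\star_{\Theta_\K}(\bm{\nu}_{\mathbf{h}_t})\ge0$,
\[ \mu^\K_{t+1}(\widehat{\ell})\ \le\ \underline{\mu}^{-1}\,e^{2t\eta_t}\,e^{-t\,d_t(\widehat{\ell})}. \]

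Finally I would conclude by a case split on $d_t(\widehat{\ell})$, for each fixed $\widehat{\ell}$. Fix $\epsilon>0$ and set $\delta\triangleq\epsilon/\max(\overline{d},1)$; let $T$ be the (a.s.-finite) random time after which $2\eta_t\le\delta/2$. For $t\ge T$: if $d_t(\widehat{\ell})<\delta$ then $d_t(\widehat{\ell})\mu^\K_{t+1}(\widehat{\ell})\le\delta\le\epsilon$ trivially; if $d_t(\widehat{\ell})\ge\delta$ then the displayed bound yields $d_t(\widehat{\ell})\mu^\K_{t+1}(\widehat{\ell})\le\overline{d}\,\underline{\mu}^{-1}e^{-t\delta/2}\to0$. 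Hence $\limsup_{t}d_t(\widehat{\ell})\mu^\K_{t+1}(\widehat{\ell})\le\epsilon$ a.s.; letting $\epsilon\downarrow0$ and summing over the finite set $\Theta_\K$ yields \eqref{eq:asym-consistency}. The hard part will not be the martingale estimate but the regularity needed to keep the likelihood-ratio increments well-behaved: securing the uniform positive lower bound on \eqref{eq:sub-cond-prod} (immediate under full-support observations and deterministic rollout, but requiring a truncation/stopping-time argument in general), and, in the misspecified regime, handling conjectures that assign zero subjective probability to some realized $\mathbf{i}^\K$---which receive zero posterior mass thereafter and so drop out of the sum. The conceptual point beyond the classical misspecified-Bayesian-learning results \cite{schwartz65consistency,esponda16berk} is that, by routing everything through the \emph{empirical} occupancy measure, the decomposition of $\sum_\tau\E[L_\tau\mid\mathcal{F}_\tau]$ is an exact identity rather than an asymptotic approximation, so no stationarity or ergodicity of the induced process is required.
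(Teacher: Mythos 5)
Your proposal follows essentially the same route as the paper's appendix proof: unroll \eqref{eq:bayesian-learning} into $\mu^\K_{t+1}(\widehat{\ell})\propto\mu^\K_1(\widehat{\ell})\exp\big(\sum_{\tau\le t}L_\tau(\widehat{\ell})\big)$, absorb the nonstationarity exactly into the empirical occupancy measure so that the conditional means produce $K(\widehat{\ell},\bm{\nu}_{\mathbf{h}_t})$ with no approximation, kill the fluctuation term by a martingale strong law (uniform over the finite $\Theta_\K$), lower-bound the denominator through the full-support prior and a minimizer in $\Theta_\K^\star(\bm{\nu}_{\mathbf{h}_t})$, and finish with an exponential comparison and an $\epsilon$-threshold split. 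The paper does exactly this (Lemma~\ref{app-lem:mds} is your $\tfrac1t M_t(\widehat{\ell})\to 0$ step, its $\Theta^\epsilon_\K$ split with $\eta=\epsilon/4$ is your per-$\widehat{\ell}$ case split); your cancellation of the common entropy term is the same bookkeeping as the paper's division by the objective likelihood.

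The one genuine gap is the integrability step behind the martingale strong law. You secure bounded increments, and the uniform bound $\overline{d}$ on $K$ that your case split needs, only by importing an extra hypothesis---full-support $z^\K$ and deterministic rollout actions giving $\pr[\mathbf{i}^\K\mid\widehat{\ell},\mathbf{b}^\K]\ge\min_{o,s}z^\K(o\mid s)>0$---which is not among the theorem's assumptions (Assumption~\ref{assumption:bayes} only guarantees that \emph{some} candidate assigns positive probability to each feasible feedback), and you defer the general case to an unspecified truncation/stopping-time argument. The paper closes this differently and without any support condition on $z^\K$: in Lemma~\ref{app-lem:mds} it bounds $\E[X_\tau^2]$ uniformly via Jensen/Cauchy--Schwarz together with the elementary fact that $x(\ln x)^2$ is bounded on $(0,1]$, then applies the martingale convergence theorem to $Y_t=\sum_{\tau\le t}X_\tau/\tau$ and Kronecker's lemma to get $\tfrac1t\sum_{\tau\le t}X_\tau\to 0$ a.s., needing only that realized histories have $\pr[\mathbf{i}^\K_{\tau+1}\mid\cdot]\in(0,1]$; and in the main argument it never needs $\overline{d}<\infty$, because the numerator is controlled by the monotonicity of $x e^{-tx}$ for $x\ge t^{-1}$, i.e., $\Delta K\,e^{-t\Delta K}\le\epsilon e^{-t\epsilon}$ on $\Theta_\K^\epsilon$. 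To match the theorem's stated generality you should replace your bounded-increment SLLN and the $\overline{d}$-dependent case split with a second-moment argument and an $xe^{-tx}$-type bound of this kind (or actually carry out the truncation you allude to), since as written your proof establishes the result only under the added regularity assumption.
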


To see the consistency expressed by \eqref{eq:asym-consistency}, we first note that the difference in \eqref{eq:asym-consistency}, denoted by $\Delta K(\widehat{\ell}_{-\K}, \bm{\nu}_{\mathbf{h}_t})$, is non-negative. Therefore, the limit indicates that $\mu_t^\K$ assigns arbitrarily small probability mass to $\supp\{\Delta K(\widehat{\ell}_{-\K}, \bm{\nu}_{\mathbf{h}_t})\}$ and, equivalently, concentrates on $\Theta_\K^\star(\bm{\nu}_{\mathbf{h}_t})$ asymptotically. This observation leads to the following proof sketch.

\begin{proof}
For simplicity, we consider \textsc{col} with a binary candidate set $\Theta_{\K}=\{\widehat{\ell}_1, \widehat{\ell}_2\}$ and defer a rigorous proof to the appendix. Expressing $\mu_t^{\K}$ recursively using \eqref{eq:bayesian-learning} yields
\begin{align*}
 &\mu_{t+1}^{\K}(\widehat{\ell}_{1})
        = \left(1+ \rho_1\prod_{\tau=1}^{t}\frac{\mathbb{P}[\mathbf{i}^{\K}_{\tau+1} \mid \widehat{\ell}_{2},\mathbf{b}_{\tau}]}{\mathbb{P}[\mathbf{i}^{\K}_{\tau+1} \mid \widehat{\ell}_{1},\mathbf{b}_{\tau}]}\right)^{-1}\\
        &=\left(1+ \rho_1\prod_{\tau=1}^{t}\frac{\mathbb{P}[\mathbf{i}^{\K}_{\tau+1} \mid \widehat{\ell}_{2},\mathbf{b}_{\tau}]/\mathbb{P}[\mathbf{i}^{\K}_{\tau+1} \mid \ell_{-\K},\mathbf{b}_{\tau}]}{\mathbb{P}[\mathbf{i}^{\K}_{\tau+1} \mid \widehat{\ell}_{1},\mathbf{b}_{\tau}]/\mathbb{P}[\mathbf{i}^{\K}_{\tau+1} \mid \ell_{-\K},\mathbf{b}_{\tau}]}\right)^{-1},
\end{align*}
where $\rho_1=\mu_1^{\K}(\widehat{\ell}_2)/\mu_1^{\K}(\widehat{\ell}_1)$. The second equality holds as the numerator and the denominator are divided by the same objective conditional probability. Using the martingale convergence theorem, we obtain
\begin{align*}
    &\prod_{\tau=1}^{t}\frac{\mathbb{P}[\mathbf{i}^{\K}_{\tau+1} \mid \widehat{\ell}_{2},\mathbf{b}_{\tau}]}{\mathbb{P}[\mathbf{i}^{\K}_{\tau+1} \mid \ell_{-\K},\mathbf{b}_{\tau}]}= \exp\left\{ -t Z_{t+1}(\widehat{\ell}_2)\right\},\\
    &Z_{t+1}(\widehat{\ell}_2)\triangleq\left[\frac{1}{t}\sum_{\tau=1}^{t}\ln \frac{\mathbb{P}[\mathbf{i}^{\K}_{\tau+1} \mid \ell_{-\K},\mathbf{b}_{\tau}]}{\mathbb{P}[\mathbf{i}^{\K}_{\tau+1} \mid \widehat{\ell}_{2},\mathbf{b}_{\tau}]}\right]\xrightarrow[a.s.]{t\rightarrow\infty}K(\widehat{\ell}_2, \bm{\nu}_{\mathbf{h}_t}).
\end{align*}
Therefore, $\mu_{t+1}^{\K}(\widehat{\ell}_2)$ almost surely converges to
\begin{equation*}
    \left(1+\rho_1 \exp\left\{-t[K(\widehat{\ell}_{2}, \bm{\nu}_{\mathbf{h}_t})-K(\widehat{\ell}_{1}, \bm{\nu}_{\mathbf{h}_t})] \right\}\right)^{-1},
\end{equation*}
which approaches the Dirac-delta function on $\widehat{\ell_1}$, if $K(\widehat{\ell}_{2}, \bm{\nu}_{\mathbf{h}_t})-K(\widehat{\ell}_{1}, \bm{\nu}_{\mathbf{h}_t})>0$, i.e., concentrating on $\Theta_{\K}^\star$.
\end{proof}

After updating the conjecture $\widehat{\ell}_{-\K,t}$, player $\K$ reconstructs the opponent's strategy $\widehat{\pi}_{-\K,t}$ using the conjectured parameter. For example, suppose the parameter represents the conjectured opponent's lookahead horizon as shown in the case study. In this case, the player first performs an actor-critic update standing in the opponent's shoes, i.e., it performs a $\widehat{\ell}_{-\K,t}$-rollout against its own strategy $\pi_{\K,t-1}$. The resulting rollout policy $\widehat{\pi}_{-\K,t}$ serves as the conjectured strategy of the opponent (line 12 in Alg.~\ref{alg:online_rollout}), which further leads to the rollout update (line 14, Alg.~\ref{alg:online_rollout}). Such a conjectural rollout procedure is an instance of the proposed \textsc{fac} architecture in \textsc{col}.  One can freely incorporate deep learning methods into \textsc{col}, i.e., $\widehat{\pi}_{-\K}(\widehat{\ell}_{-\K})$ represents a generic parameterization.

\subsection{Equilibrium Analysis in Repeated \textsc{aisg}s}
Since \textsc{col} relies on a first-order belief of the private information, it is not sensible to discuss its connection to equilibrium concepts in \textsc{aisg}s that involve the universal type space based on infinite belief hierarchies \cite[Chapter 11]{solan_game}. We instead relate the asymptotics of Alg.~\ref{alg:online_rollout} to the Berk-Nash equilibrium, a recently popularized concept characterizing players' rational behaviors under their \textit{subjective} perceptions of the game \cite{esponda16berk}. To streamline the analysis, we focus on a special case of \textsc{aisg}: repeated games with stochastic states and observable actions. In this type of \textsc{aisg}, the state $s_t$ is sampled from $\mathbf{b}_1$ repeatedly at each stage, which ensures that the occupancy measure $\nu^{\K}$ exists and is uniquely determined by $\mathbf{b}_1$ and the kernel $z^{k}$ \cite{tao23pot}.

Compared with the \textsc{aisg} setup in \Cref{sec:aisg}, the repeated game assumes public observations on actions rather than merely partial observations, i.e., $(a_{t-1}^{\K}, a_{t-1}^{-\K})$ is revealed to the players prior to their decisions at time $t$ and $\mathbf{i}_t^{\K}=\{a_{t-1}^{\K}, a_{t-1}^{-\K}, o_t^{\K}\}$. This information structure secures the subjective conditional probability's dependence on the opponent, leading to an effective Bayesian forecast.

Once player $\K$ observes $o_t^{\K}$, its posterior belief is $\mathbf{b}_t^{\K}(s_t=s)=\mathbf{b}_1(s)z^{\K}(o_t^{\K})/\sum_{\tilde{s}}\mathbf{b}_1(\tilde{s})z^{\K}(o_t^{\K}|\tilde{s})$, which is solely determined by its private observation. Hence, the behavior strategy for the repeated game is a mapping $\pi_{\K}: \mathcal{O}^{\K}\rightarrow \Delta(\mathcal{A}^{\K})$. When the player adopts Alg.~\ref{alg:online_rollout}, the subjective conditional probability in \eqref{eq:sub-cond-prod} (suppressing $o_t^{\K}, a_{t-1}^{\K}$) turns into $ \mathbb{P}[a_{t-1}^{-\K}\mid \widehat{\ell}_{-\K, t-1}, \mathbf{b}_{t-1}^{\K}]=\sum_{s\in \mathcal{S}, o\in \mathcal{O}^{-\K}}\mathbf{b}_{t-1}^{\K}(s)z^{-\K}(o|s)\widehat{\pi}_{-\K,t-1}(o)$, which leads to the Bayesian update in \eqref{eq:bayesian-learning}.


Following the \textsc{col} setup, each player acquires a private candidate set $\Theta_\K$ that prescribes its subjective perceptions of potential opponent strategies, also known as subjective modeling \cite{esponda16berk}. We assume that players are myopic and minimize the expected stage cost at each time step. Consequently, the rollout operation \eqref{eq:rollout} reduces to the best response dynamics against the conjectured opponent: $\pi_{\K,t}\in \argmin_{\pi} \E_{\pi, \widehat{\pi}_{-\K,t}}[c^{\K}(S_t,A_t^{\K}, A_t^{-\K})\mid \mathbf{b}_t^{\K}]$, where $\ell_{\K}=1$. Such a conjectured best response can also be written as $\pi_{\K,t}\in \argmin_{\pi} \E_{ \mathbb{P}[\cdot\mid \widehat{\ell}_{-\K,t}, \mathbf{b}_t^{\K}]}[c^{\K}(S_t,A_t^{\K}, A_t^{-\K})\mid \mathbf{b}_t^{\K}]$. Denote by $\mathbb{P}[\cdot\mid \mu^{\K}, \mathbf{b}^{\K}]\triangleq \sum_{\widehat{\ell}}\mu^{\K}(\widehat{\ell}_{-\K})\mathbb{P}[\cdot\mid \widehat{\ell}_{-\K}, \mathbf{b}^{\K}]$.


The pair $\langle \Gamma, \{\Theta_{\K}\}_{\K\in \mathcal{N}}\rangle$ leads to a game where players have distinct subjective perceptions, which is beyond classical game-theoretic solution concepts (e.g., Nash equilibrium), which assume that all players know $\Gamma$. A more appropriate solution concept for this type of game is the Berk-Nash equilibrium (\Cref{def:berk-nash}), which characterizes a steady state where each player follows the best response based on its subjective conjecture (optimality) and where conjectures are consistent with objective observations (consistency).
\begin{definition}[Berk-Nash Equilibrium, adapted from \cite{esponda16berk}]
\label{def:berk-nash}
    A strategy profile $(\pi_{\K}, \pi_{-\K})$, $\pi_{\K}:\mathcal{O}^{\K}\rightarrow \Delta(\mathcal{A}^{\K})$, is a Berk-Nash equilibrium of $\langle \Gamma, \{\Theta_{\K}\}_{\K\in \mathcal{N}}\rangle$ if there exists $\mu^\K\in \Delta(\Theta_{\K})$, for all $\K\in\mathcal{N}$, such that
    \begin{enumerate}[(i),leftmargin=*]
        \item (\textsc{optimality}) for any $o^{\K}$ and its induced belief $\mathbf{b}^{\K}$, $\pi_{\K}(\cdot|o^{\K})\in \argmin_{\pi}\E_{\mathbb{P}[\cdot\mid \mu^{\K}, \mathbf{b}^{\K}]}[c^{\K}(S, A^{\K}, A^{-\K})]$,
        \item (\textsc{consistency}) $\mu^{\K}\in \Delta(\Theta_{\K}^\star)$.
    \end{enumerate}
  \end{definition}

\vspace{2mm}

If the sequence $(\bm{\pi}_{\mathbf{h}_t})_{t\geq 1}$ produced by \textsc{col} (Alg.~\ref{alg:online_rollout}) converges, then \Cref{thm:asym-consistency} asserts the consistency condition while the best response ensures the optimality condition in \Cref{def:berk-nash}, as stated in Cor.\ref{coro:convergence-berk-nash}. We leave the last-iterate convergence analysis, e.g., \cite{shutian23erm}, as the future extension.
\begin{corollary}
\label{coro:convergence-berk-nash}
If  $(\bm{\pi}_{\mathbf{h}_t})_{t \geq 1}$ by Alg.~\ref{alg:online_rollout} converges, then it converges to a Berk-Nash equilibrium of $\langle \Gamma, \{\Theta_{\K}\}_{\K\in \mathcal{N}}\rangle$.
\end{corollary}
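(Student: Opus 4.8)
The plan is to produce, for each player $\K$, a belief $\mu^\star_\K\in\Delta(\Theta_\K)$ under which the limit $\bm\pi^\star\triangleq\lim_{t}\bm\pi_{\mathbf{h}_t}$ satisfies the two clauses of \Cref{def:berk-nash}. I would first record the simplifications available in the repeated-game regime. Because $s_t$ is redrawn from $\mathbf{b}_1$ at every stage, the stage observations are i.i.d., so the empirical occupancy measure $\bm\nu_{\mathbf{h}_t}$ converges almost surely to the fixed occupancy measure $\bm\nu$ determined by $\mathbf{b}_1$ and $\{z^\K\}_{\K}$, and in particular $K^\star_{\Theta_\K}(\bm\nu_{\mathbf{h}_t})\to K^\star_{\Theta_\K}(\bm\nu)$. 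Moreover $(\mu_t^\K)_t$ lies in the compact set $\Delta(\Theta_\K)$ since $\Theta_\K$ is finite (\Cref{assumption:bayes}(i)), so I may pass to a subsequence $(t_n)$ along which $\mu^\K_{t_n+1}\to\mu^\star_\K$ for every $\K$ while still $\bm\pi_{\mathbf{h}_{t_n}}\to\bm\pi^\star$ and $\bm\nu_{\mathbf{h}_{t_n}}\to\bm\nu$; this $\mu^\star_\K$ is the candidate equilibrium belief, and the argument is run on the (probability-one, intersected with the convergence event) realization on which these limits hold.

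For the consistency clause (ii) I would invoke \Cref{thm:asym-consistency} directly: almost surely the nonnegative quantity $\sum_{\widehat\ell_{-\K}}\bigl(K(\widehat\ell_{-\K},\bm\nu_{\mathbf{h}_t})-K^\star_{\Theta_\K}(\bm\nu_{\mathbf{h}_t})\bigr)\mu^\K_{t+1}(\widehat\ell_{-\K})\to 0$. Conjectures $\widehat\ell_{-\K}$ with $K(\widehat\ell_{-\K},\bm\nu)=+\infty$ are forced to lose all mass in the limit, so $\mu^\star_\K$ is supported on conjectures with finite divergence, on which $\widehat\ell_{-\K}\mapsto K(\widehat\ell_{-\K},\cdot)$ is continuous at $\bm\nu$ (finiteness of $\mathcal{S},\mathcal{O}^\K$ together with \Cref{assumption:bayes}(iii) keeps the relevant likelihoods positive at feasible belief pairs, so the log-ratios and their expectations depend continuously on the belief weights). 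Evaluating along $(t_n)$ with $\mu^\K_{t_n+1}\to\mu^\star_\K$ then yields $\sum_{\widehat\ell_{-\K}}\bigl(K(\widehat\ell_{-\K},\bm\nu)-K^\star_{\Theta_\K}(\bm\nu)\bigr)\mu^\star_\K(\widehat\ell_{-\K})=0$, and nonnegativity of each summand gives $\supp\mu^\star_\K\subseteq\Theta_\K^\star(\bm\nu)$, i.e.\ $\mu^\star_\K\in\Delta(\Theta_\K^\star)$, which is exactly clause (ii).

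For the optimality clause (i) I would start from the fact that, in the myopic repeated-game specialization, the actor step of \Cref{alg:online_rollout} reduces to a stage-cost best response: $\pi_{\K,t}(\cdot\mid o)$ is supported on $\argmin_{a\in\mathcal{A}^\K}\E_{\mathbb{P}[\cdot\mid\widehat\ell_{-\K,t},\mathbf{b}^\K(o)]}[c^\K(S,a,A^{-\K})]$, with $\widehat\ell_{-\K,t}\sim\mu_t^\K$ drawn independently of the freshly sampled $o_t^\K$. Averaging over time, the empirical frequency with which each conjecture is used, conditioned on observing $o$, converges to $\mu^\star_\K$; combined with $\bm\pi_{\mathbf{h}_t}\to\bm\pi^\star$ and the stabilization of the conjectured opponent policy $\widehat\pi_{-\K}$ along the convergent play, $\pi^\star_\K(\cdot\mid o)$ is a $\mu^\star_\K$-mixture of best responses against the individual conjectures evaluated at $\bm\pi^\star$. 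The hard part — and, in my view, the main obstacle of the whole argument — is to upgrade this to a best response against the \emph{averaged} prediction $\mathbb{P}[\cdot\mid\mu^\star_\K,\mathbf{b}^\K(o)]$ appearing in \Cref{def:berk-nash}(i), since a mixture of best responses need not be a best response to the corresponding mixture.

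The resolution I would pursue exploits the consistency already established. By clause (ii) the limiting posterior loads only on $\Theta_\K^\star(\bm\nu)$; when the model is well specified — equivalently $K^\star_{\Theta_\K}(\bm\nu)=0$ — every consistent conjecture reproduces, for $\nu^\K$-a.e.\ feasible belief, the true conditional distribution of $A^{-\K}$, so the per-observation cost vectors $\E_{\mathbb{P}[\cdot\mid\widehat\ell_{-\K},\mathbf{b}^\K(o)]}[c^\K(S,\cdot,A^{-\K})]$ coincide over $\supp\mu^\star_\K$, whence the $\mu^\star_\K$-mixture of their best responses is itself a minimizer of $\E_{\mathbb{P}[\cdot\mid\mu^\star_\K,\mathbf{b}^\K(o)]}[c^\K(S,\cdot,A^{-\K})]$, giving (i). In the genuinely misspecified case I would instead verify optimality for the natural variant of the actor that best-responds to the posterior predictive $\mathbb{P}[\cdot\mid\mu_t^\K,\mathbf{b}_t^\K]$ rather than to a single sampled conjecture, for which (i) follows immediately from upper hemicontinuity of the best-response correspondence. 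Two care points remain and are, together with the observational-equivalence claim, where I expect the technical weight of a rigorous (appendix) proof to lie: ensuring that the $\mu^\star_\K$ certified by the subsequential compactness argument is also the one governing the Cesàro average defining $\bm\pi^\star$ — automatic when $\Theta_\K^\star(\bm\nu)$ is a singleton, and otherwise handled by the observational-equivalence of all conjectures in $\supp\mu^\star_\K$ — and a careful check that $\widehat\pi_{-\K}$, being itself reconstructed as a best response in the opponent's shoes against the player's current strategy, indeed stabilizes once $\bm\pi_{\mathbf{h}_t}$ converges.
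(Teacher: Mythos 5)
Your handling of the consistency clause is the paper's argument verbatim: i.i.d.\ states, Glivenko--Cantelli convergence of $\bm{\nu}_{\mathbf{h}_t}$ to $\bm{\nu}$, \Cref{thm:asym-consistency} for concentration, and sequential compactness of $\Delta(\Theta_{\K}^\star)$ to extract a subsequential limit $\mu_{\infty}^{\K}\in\Delta(\Theta_{\K}^\star)$, which certifies clause (ii) of \Cref{def:berk-nash}. For optimality, however, the paper takes exactly the route you relegate to the ``variant'' actor: it treats each iterate $\pi_{\K,t}$ as an exact minimizer of $g_{\mu_t}(\pi)\triangleq \E_{\mathbb{P}[\cdot\mid \mu_t^{\K},\mathbf{b}^{\K}]}[c^{\K}(S,A^{\K},A^{-\K})]$, i.e.\ a best response to the posterior predictive, and argues by contradiction: if the limit strategy failed optimality against $\mu_{\infty}^{\K}$, some subsequence of the iterates would fail it too; but $g(\pi,\mu)$ is bilinear on the compact set $\Delta(\mathcal{A}^{\K})\times\Delta(\Theta_{\K})$, so $g_{\mu_t}$ epi-converges to $g_{\mu_{\infty}}$ in the Painlev\'e--Kuratowski sense and \cite[Thm 7.33]{RockWets98} yields $\limsup_t \argmin g_{\mu_t}\subset\argmin g_{\mu_{\infty}}$ --- this is your ``upper hemicontinuity of the best-response correspondence,'' made precise through set convergence rather than through any averaging over sampled conjectures.

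The mixture-of-best-responses difficulty you flag is therefore real as a reading of Alg.~\ref{alg:online_rollout} (the main text samples $\widehat{\ell}_{-\K,t}\sim\mu_t^{\K}$, while the appendix proof asserts $\pi_{\K,t}\in\argmin g_{\mu_t}$), but your proposed resolution does not fully close it: the observational-equivalence argument works only in the well-specified case $K^\star_{\Theta_{\K}}(\bm{\nu})=0$ (and even there you should check that equal predictive laws for the feedback imply equal expected-cost vectors, since the cost couples $S$ and $A^{-\K}$), whereas the corollary and the surrounding text explicitly allow misspecification; in that case you retreat to proving the statement for a modified actor, i.e.\ for a different algorithm than the one stated. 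So relative to the paper: your (ii) is identical; your (i) is correct precisely when you adopt the predictive-best-response reading that the paper's proof uses implicitly, in which case the two arguments coincide (upper hemicontinuity of $\argmin$ under convergence of $\mu_t^{\K}$); and your additional case analysis for the sampling actor, while a legitimate observation about a tension in the paper, leaves the misspecified, non-observationally-equivalent situation unproved --- a gap the paper avoids only by its interpretation of the actor, not by a stronger argument.
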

\begin{proof}
    We present the key steps in this sketch. Due to the i.i.d. states, the belief state $\mathbf{b}_t^{\K}$ is also i.i.d., and hence, the empirical occupancy measure converges almost surely by the Glivenko-Cantelli theorem. Denote by $\bm{\nu}$ the limit point of the joint belief occupancy. From \Cref{thm:asym-consistency}, $\mu_t^\K$ asymptotically concentrates on $\Theta_{\K}^\star(\bm{\nu})$. Note that $\Delta(\Theta_{\K}^\star(\bm{\nu}))$, as a finite-dimensional probability simplex, is compact. There exists a convergent subsequence of $(\mu_t^{\K})_{t\geq 1}$ whose limit point is denoted by $\mu_{\infty}^{\K}\in \Delta(\Theta_{\K}^\star)$ and is the consistent conjecture distribution in (ii) in \Cref{def:berk-nash}, and the rest is to show that the limit point of the empirical strategy profile satisfies (i). Since $\pi_t^\K$ is induced by the best response correspondence, we invoke the Painlev\'e-Kuratowski set convergence theorem \cite[Thm 7.11]{RockWets98} to complete the proof. 
\end{proof}

\begin{figure*}
  \centering
  \scalebox{0.7}{
    \input{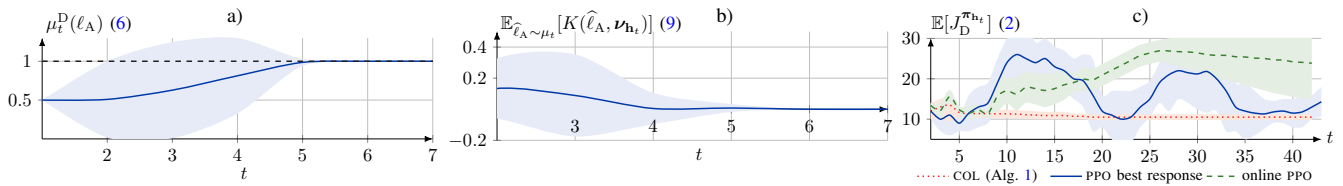}
  }
  \caption{Evaluation results for the intrusion response case study; values indicate the mean; the shaded areas and the error bars indicate the 95\% confidence interval based on $20$ random seeds; hyperparameters are listed in online appendix.}
  \label{fig:evaluation_results}
  \vspace{-0.5cm}
\end{figure*}
\section{Case Study: Intrusion Response}\label{sec:case_study}
We present \textsc{col} through a use case on defending the IT infrastructure of an organization against Advanced Persistent Threats (\textsc{apt}s). We formulate this use case as a zero-sum partially observable stochastic game between the defender and the attacker (a type of \textsc{aisg}). The state $s_t$ represents the number of compromised servers, and the initial distribution is a point mass $\mathbf{b}_1(s=0)=1$. Both players can invoke two actions: ($\mathsf{S}$)top and ($\mathsf{C}$)ontinue. $\mathsf{S}$ triggers a change in the game state while $\mathsf{C}$ is a passive action that does not change the state. Specifically, $a^{\mathrm{A}}_t=\mathsf{S}$ is the attacker's compromise action and $a^{\mathrm{D}}_t=\mathsf{S}$ is the defender's recovery action. The state transitions follow $f(S_{t+1}=0 \mid s_{t}, a^{\D}_t=\mathsf{S}, a^{\mathrm{A}}_t) = 1$, $f(S_{t+1}=s_t \mid s_{t}, \mathsf{C}, \mathsf{C}) = 1$, $f(S_{t+1}=s_t \mid s_{t}, \mathsf{C}, \mathsf{S}) =0$, and $f(S_{t+1}=\min[s_t+1,N]  \mid s_{t}, \mathsf{C}, \mathsf{S}) =1 $. The game involves asymmetric information as the defender is uncertain about the compromised servers. The information structures are $\mathbf{i}_t^{\A}=(a^{\D}_{t-1}, a^{\A}_{t-1}, o_t, s_t)$ and $\mathbf{i}_t^{\D}=(a_{t-1}^{\D}, o_t)$, where $o_t$ represents the number of system alerts. The distribution of $o_t$ is estimated based on measurement data from a digital twin (see online appendix and \cite{csle_source_code}). The performance of a defender strategy is quantified through the cost function:
\begin{align*}
c(s_t, a^{(\mathrm{D})}_t) &\triangleq \overbrace{s^{5/4}_t\mathds{1}_{a^{(\mathrm{D})}_t\neq\mathsf{S}}}^{\text{intrusion cost}} + \overbrace{\mathds{1}_{a^{(\mathrm{D})}_t=\mathsf{S}}(1- 2\mathds{1}_{s_t>0})}^{\text{response action cost}},
\end{align*}
which encourages rapid response when servers are compromised ($s_t^{5/4}$ is strictly increasing) while penalizing unnecessary recoveries.

We implement the use case on an infrastructure with $64$ servers where we run $10$ different types of real-world \textsc{apt}s. The infrastructure configuration and the \textsc{apt} instances are adapted from \cite{tifs_hlsz_supplementary} and presented in the appendix. The attacker also employs online rollout (nonstationary attack), and the lookahead horizon is $\ell_{\A}=1$. The candidate set is $\Theta=\{1,2\}$ with $\mu_1^{\K}$ being the uniform distribution.  Initial (belief-independent) strategies are given by $\pi_{\D,1}(\mathsf{S}|\cdot)=1$ and $\pi_{\A,1}(\mathsf{S}|\cdot)=0.05$. \textsc{ppo} baselines setup is in the appendix.

Figures \ref{fig:evaluation_results}.a--b show the evolution of the defender's conjecture distribution $\mu^{\mathrm{D}}_t$ (\ref{eq:bayesian-learning}) and the discrepancy (\ref{eq:kl}). We observe that $\mu_t$ converges and concentrates on the consistent conjecture after $5$ time steps, as predicted by \Cref{thm:asym-consistency}. Figure \ref{fig:evaluation_results}.c shows the expected cost (\ref{eq:objective}) of Alg. \ref{alg:online_rollout} and the expected cost of two reinforcement learning techniques: online self-play with \textsc{ppo} \cite[Alg. 1]{ppo} and approximate best response dynamics with \textsc{ppo} \cite[Alg. 1]{ppo}. We note that the expected cost of best response dynamics oscillates while online \textsc{ppo} does not converge. Similar behavior of best response dynamics has been observed in related work \cite{tifs_hlsz_supplementary}. By contrast, the expected cost of Alg. \ref{alg:online_rollout} is significantly more stable, and its behavior is consistent with convergence to a Berk-Nash equilibrium.

\section{Conclusion}
We propose conjectural online learning (\textsc{col}), an online method for stochastic games with information asymmetry. By utilizing first-order beliefs, \textsc{col} spares learning agents from nested beliefs, ensures conjecture consistency with information feedback, and adapts efficiently to the conjectured opponent. The empirical strategy profile of \textsc{col}, shall it stabilize, converges to the Berk-Nash equilibrium. The numerical results indicate the fast convergence of \textsc{col} against online-rollout-based attacks. A promising future direction for this online learning paradigm is to design and influence players' subjective perceptions for incentive provision purposes.
\bibliographystyle{ieeetr}
\bibliography{ref.bib}

\newpage
\begin{appendix}

\subsection{Full Proof of \Cref{thm:asym-consistency}}
\setcounter{equation}{0}
\renewcommand{\theequation}{\Alph{section}.\arabic{equation}}
Following the proof sketch in the main text, our proof begins with expressing $\mu_t^\K$ in terms of log-likelihood ratios. The sample average of the log-likelihood fractions almost surely converges to the KL divergence by the martingale convergence theorem \cite[Thm. 6.4.3]{Ash:2000uj}. Towards this proof, we state the following lemma.
\begin{lemma}
\label{app-lem:mds}
    For any $\widehat{\ell}_{-\K}\in \Theta_{\K}$ and $(\bm{\pi}_{\mathbf{h}_t}, \bm{\nu}_{\mathbf{h}_t})$ generated by Alg.~\ref{alg:online_rollout}, the following limit holds a.s.-$\mathbb{P}^{\mathscr{B},\mathscr{R}}$,
    \begin{equation*}
        \lim_{t\rightarrow 0 }\bigg| \underbrace{\frac{1}{t}\sum_{\tau=1}^t \ln \frac{\mathbb{P}[\mathbf{i}^{\K}_{\tau+1} \mid \ell_{-\K}, \mathbf{b}_\tau]}{\mathbb{P}[\mathbf{i}^{\K}_{\tau+1} \mid \widehat{\ell}_{-\K}, \mathbf{b}_\tau]}}_{\triangleq Z_{t+1}(\widehat{\ell}_{-\K})}-K(\widehat{\ell}_{-\K}, \bm{\nu}_{\mathbf{h}_t})\bigg|=0.
    \end{equation*}
\end{lemma}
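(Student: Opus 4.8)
The plan is to read off $Z_{t+1}(\widehat{\ell}_{-\K})$ as the time-average of a bounded martingale difference sequence plus a compensator term that equals $K(\widehat{\ell}_{-\K},\bm{\nu}_{\mathbf{h}_t})$ by construction, so that the claimed limit collapses to a martingale strong law of large numbers. Fix $\widehat{\ell}_{-\K}\in\Theta_\K$ and let $\{\mathcal{F}_\tau\}_{\tau\geq1}$ be the filtration generated by the joint play history, so that the joint belief $\mathbf{b}_\tau=(\mathbf{b}_\tau^\K,\mathbf{b}_\tau^{-\K})$ is $\mathcal{F}_\tau$-measurable and, under $\mathbb{P}^{\mathscr{B},\mathscr{R}}$, the conditional law of $\mathbf{i}^\K_{\tau+1}$ given $\mathcal{F}_\tau$ is the objective kernel $\mathbb{P}[\,\cdot\mid\ell_{-\K},\mathbf{b}_\tau]$ recorded just before \eqref{eq:kl}. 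Write $g(\mathbf{b})\triangleq\E_{\mathbf{I}^\K\sim\mathbb{P}[\cdot\mid\ell_{-\K},\mathbf{b}]}\big[\ln\big(\mathbb{P}[\mathbf{I}^\K\mid\ell_{-\K},\mathbf{b}]/\mathbb{P}[\mathbf{I}^\K\mid\widehat{\ell}_{-\K},\mathbf{b}]\big)\big]$ for the per-step \textsc{kl} divergence; it is nonnegative and finite on every belief reached along the realized path by Assumption~\ref{assumption:bayes}(iii) together with finiteness of $\mathcal{S},\mathcal{O}^\K,\mathcal{A}^\K,\mathcal{A}^{-\K}$. Taking conditional expectations gives $\E\!\left[\ln\!\big(\mathbb{P}[\mathbf{i}^\K_{\tau+1}\mid\ell_{-\K},\mathbf{b}_\tau]/\mathbb{P}[\mathbf{i}^\K_{\tau+1}\mid\widehat{\ell}_{-\K},\mathbf{b}_\tau]\big)\,\middle|\,\mathcal{F}_\tau\right]=g(\mathbf{b}_\tau)$, so $W_\tau\triangleq\ln\big(\mathbb{P}[\mathbf{i}^\K_{\tau+1}\mid\ell_{-\K},\mathbf{b}_\tau]/\mathbb{P}[\mathbf{i}^\K_{\tau+1}\mid\widehat{\ell}_{-\K},\mathbf{b}_\tau]\big)-g(\mathbf{b}_\tau)$ is a martingale difference sequence and $M_t\triangleq\sum_{\tau=1}^t W_\tau$ a martingale, whence $Z_{t+1}(\widehat{\ell}_{-\K})=\tfrac1t M_t+\tfrac1t\sum_{\tau=1}^t g(\mathbf{b}_\tau)$.

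The compensator term needs no limiting argument: by the definition $\bm{\nu}_{\mathbf{h}_t}=\tfrac1t\sum_{\tau=1}^t\mathds{1}_{\{\mathbf{b}\}}(\mathbf{b}_\tau)$ of the empirical occupancy measure, $\tfrac1t\sum_{\tau=1}^t g(\mathbf{b}_\tau)=\E_{\mathbf{b}\sim\bm{\nu}_{\mathbf{h}_t}}[g(\mathbf{b})]$, which is precisely $K(\widehat{\ell}_{-\K},\bm{\nu}_{\mathbf{h}_t})$ in view of \eqref{eq:kl}. Hence $Z_{t+1}(\widehat{\ell}_{-\K})-K(\widehat{\ell}_{-\K},\bm{\nu}_{\mathbf{h}_t})=\tfrac1t M_t$, and it remains to establish $\tfrac1t M_t\to 0$ a.s.-$\mathbb{P}^{\mathscr{B},\mathscr{R}}$. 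For this I would bound the increments $|W_\tau|\leq C$ uniformly and apply the martingale convergence theorem already cited in this appendix \cite[Thm.~6.4.3]{Ash:2000uj}: since $\sum_{\tau\geq1}\tau^{-2}\E[W_\tau^2]\leq C^2\sum_{\tau\geq1}\tau^{-2}<\infty$, the $L^2$-bounded martingale $\sum_\tau W_\tau/\tau$ converges a.s., and Kronecker's lemma then yields $\tfrac1t M_t\to 0$ a.s. This is the rigorous substitute for the phrase ``using the martingale convergence theorem'' in the proof sketch of \Cref{thm:asym-consistency}.

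The main obstacle is the uniform bound $|W_\tau|\leq C$, equivalently a uniform strictly positive lower bound $\delta>0$ on the subjective likelihoods $\mathbb{P}[\mathbf{i}^\K_{\tau+1}\mid\widehat{\ell}_{-\K},\mathbf{b}_\tau]$ along the realized path (the numerator is automatically in $(0,1]$). In the repeated-\textsc{aisg} setting of \Cref{coro:convergence-berk-nash} this is immediate: $\mathbf{b}_\tau$ is a deterministic function of the finitely many possible observations and $\Theta_\K$ is finite, so the subjective likelihood takes values in a finite set of positive numbers by Assumption~\ref{assumption:bayes}(iii), whose minimum $\delta$ is strictly positive; for the general Alg.~\ref{alg:online_rollout} one restricts to conjectures $\widehat{\ell}_{-\K}$ with $\mu_t^\K(\widehat{\ell}_{-\K})>0$ for all $t$ (for which $g$ is everywhere finite) and invokes continuity of $\mathbf{b}\mapsto\mathbb{P}[\mathbf{i}^\K\mid\widehat{\ell}_{-\K},\mathbf{b}]$ on the compact simplex $\mathcal{B}^2$. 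Everything else is the Doob decomposition above and the bookkeeping identity between the time-average of per-step divergences and $K(\,\cdot\,,\bm{\nu}_{\mathbf{h}_t})$.
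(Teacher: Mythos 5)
Your overall route is the paper's route: decompose $Z_{t+1}(\widehat{\ell}_{-\K})$ into a martingale-difference part plus a compensator, observe that the compensator equals $K(\widehat{\ell}_{-\K},\bm{\nu}_{\mathbf{h}_t})$ exactly by the definition of the empirical occupancy measure, and then kill the averaged martingale part by applying the martingale convergence theorem to the weighted sum $\sum_\tau X_\tau/\tau$ followed by Kronecker's lemma. Whether you carry the log-\emph{ratio} as a single \textsc{mds} $W_\tau$ or, as the paper does, split it into two log-likelihood sums (one for $\ell_{-\K}$, one for $\widehat{\ell}_{-\K}$) handled term by term is a cosmetic difference.

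The genuine gap is your moment bound. You reduce everything to a uniform a.s.\ bound $|W_\tau|\le C$, i.e.\ a uniform $\delta>0$ lower bound on $\mathbb{P}[\mathbf{i}^{\K}_{\tau+1}\mid\widehat{\ell}_{-\K},\mathbf{b}_\tau]$ along the realized path, and justify it by ``continuity of $\mathbf{b}\mapsto\mathbb{P}[\mathbf{i}^{\K}\mid\widehat{\ell}_{-\K},\mathbf{b}]$ on the compact simplex $\mathcal{B}^2$.'' Continuity on a compact set gives an attained minimum, but that minimum can be $0$: the subjective likelihood under a \emph{fixed} conjecture can vanish at some belief--feedback pairs (e.g.\ when $\widehat{\pi}_{-\K}$ puts zero mass on the opponent actions needed to generate the realized observation), and Assumption~\ref{assumption:bayes}(iii) only guarantees that \emph{some} $\widehat{\ell}\in\Theta_{\K}$ assigns positive probability, not the particular $\widehat{\ell}_{-\K}$ fixed in the lemma. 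Restricting to conjectures that keep $\mu^{\K}_t(\widehat{\ell}_{-\K})>0$ gives positivity at each realized step but not a uniform $\delta$ over an infinite path, so $|W_\tau|$ may still be unbounded. The paper avoids demanding a uniform lower bound: it only needs $\mathbb{E}[X_\tau^2]$ bounded, which it obtains (for the log-likelihood term) by Jensen's inequality together with the elementary bound on $x(\ln x)^2$ for $x\in(0,1]$ and finiteness of the feedback space, after rewriting $\ln\mathbb{P}[\mathbf{i}^{\K}_{\tau+1}\mid\cdot,\mathbf{b}_\tau]$ through indicators. To repair your argument you should either adopt that $L^2$ device (or the conditional-variance version $\sum_\tau \tau^{-2}\mathbb{E}[W_\tau^2\mid\mathcal{F}_{\tau-1}]<\infty$), or add an explicit hypothesis that the candidate strategies have full support (so the subjective likelihoods are bounded below on the reachable set); as written, the boundedness step does not go through.
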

\begin{proof}
    By the definition of $Z_{t+1}$ and $\bm{\nu}_{\mathbf{h}_t}$,
\begin{align*}
        &Z_{t+1}(\widehat{\ell}_{-\K}) = \sum_{\mathbf{b} \in \mathcal{B}^2}\sum_{\tau=1}^{t}t^{-1}\mathds{1}_{\{\mathbf{b}\}}(\mathbf{b}_{\tau})\ln \frac{\mathbb{P}[\mathbf{i}^{\K}_{\tau+1} \mid \ell_{-\K}, \mathbf{b}_\tau]}{\mathbb{P}[\mathbf{i}^{\K}_{\tau+1} \mid \widehat{\ell}_{-\K}, \mathbf{b}_\tau]}\\
&=\mathbb{E}_{\mathbf{b}\sim \bm{\nu}_{\mathbf{h}_t}}\left[\sum_{\tau=1}^{t}\frac{\ln \mathbb{P}[\mathbf{i}^{\K}_{\tau+1} \mid \ell_{-\K}, \mathbf{b}]}{t} - \sum_{\tau=1}^{t}\frac{\ln \mathbb{P}[\mathbf{i}^{\K}_{\tau+1} \mid \widehat{\ell}_{-\K}, \mathbf{b}]}{t}\right]
\end{align*}
It suffices to prove that $t^{-1}\sum_{\tau=1}^t \ln \mathbb{P}[\mathbf{i}^{\K}_{\tau+1} \mid \ell_{-\K}, \mathbf{b}_\tau]$ converges to $\E_{\mathbf{b}\sim \bm{\nu}_{\mathbf{h}_t}}\mathbb{E}_{\mathbf{I}^{\K}}[\ln \mathbb{P}[\mathbf{I}^{\K} | \overline{\ell}_{-\K},\mathbf{b}]$ almost surely, as the second sum inside the expectation share the same proof. As a reminder, the expectation $\mathbb{E}_{\mathbf{I}^{\K}}[\ln \mathbb{P}[\mathbf{I}^{\K} | \overline{\ell}_{-\K},\mathbf{b}]$ is taken with respect to the objective distribution $\mathbb{P}[\mathbf{I}^{\K} | \overline{\ell}_{-\K},\mathbf{b}]$.

We first show that $(\bm{\pi}_{\mathbf{h}_t})_{t\geq 1}$ generated by Alg. \ref{alg:online_rollout} induces a well-defined probability measure $\mathbb{P}^{\mathscr{B}, \mathscr{R}}$ over the set of realizable histories $\mathbf{h} \in \mathcal{H}^{\K}\times \mathcal{H}^{-\K}$. Since the space of realizable histories $\mathbf{h}\in \mathcal{H}^{\K}\times \mathcal{H}^{-\K}$ is a product of finite measurable spaces, it is countable. The strategies are Markovian with respect to the belief, and hence, the Ionescu-Tulcea extension theorem \cite{Ash:2000uj} asserts the existence of a probability measure over $\mathcal{H}^{\K}\times \mathcal{H}^{-\K}$, which is denoted by $\mathbb{P}^{\mathscr{B}, \mathscr{R}}$.

We now prove the almost sure convergence. Let $X_{\tau} \triangleq \ln\mathbb{P}[\mathbf{i}^{\K}_{\tau+1} | \ell_{-\K}, \mathbf{b}_{\tau}] - \mathbb{E}_{\mathbf{I}^{\K}}\left[\ln\mathbb{P}[\mathbf{I}^{\K} | \ell_{-\K}, \mathbf{b}_{\tau}]\right]$, and $(X_{\tau})_{\tau\geq 1}$ is a martingale difference sequence (\textsc{mds}). To see this, we need to prove that (a) $\mathbb{E}[X_{\tau} | \mathbf{h}_{\tau-1}]=0$; and (b) $\mathbb{E}[|X_{\tau}|] < \infty$. By definition,
\begin{align*}
    &\mathbb{E}[X_{\tau} | \mathbf{h}_{\tau-1}]\\
&=\mathbb{E}_{\mathbf{I}^{\K}}\left[\ln\mathbb{P}[\mathbf{I}^{\K} |\ell_{-\K}, \mathbf{b}_{\tau}] - \mathbb{E}_{\mathbf{I}^{\K}}\left[\ln\mathbb{P}[\mathbf{I}^{\K} | \ell_{-\K}, \mathbf{b}_{\tau}]\right] | \mathbf{h}_{\tau-1}\right]\\
&=\mathbb{E}_{\mathbf{I}^{\K}}\left[\ln\mathbb{P}[\mathbf{I}^{\K} | \ell_{-\K}, \mathbf{b}_{\tau}]\right] -\mathbb{E}_{\mathbf{I}^{\K}}\left[\ln\mathbb{P}[\mathbf{I}^{\K} | \ell_{-\K}, \mathbf{b}_{\tau}]\right]=0,
\end{align*}
where the second equality follows the fact that $\mathbf{I}^{\K}$ is conditionally independent of $\mathbf{h}_{\tau-1}$ given $\mathbf{b}_{\tau}$. This completes the proof of (a).

To prove (b), we start by applying Jensen's inequality:
\begin{equation}
    \mathbb{E}[|X_{\tau}|]=(\mathbb{E}[|X_{\tau}|]^2)^{1/2}\leq (\mathbb{E}[X_{\tau}^2])^{1/2},\label{eq:jensen}
\end{equation}
and hence, it suffices to bound $\E[X_{\tau}]$. Toward this end, we rewrite $\ln\mathbb{P}[\mathbf{i}^{\K}_{\tau+1} |\ell_{-\K}, \mathbf{b}_{\tau}]$ using the fact that $\mathbb{P}[\mathbf{i}^{\K}_{\tau+1}\mid \ell_{-\K}, \mathbf{b}_{\tau}]=\mathbb{E}_{\mathbf{I}^{\K}}[\mathds{1}_{\mathbf{I}^{\K}}(\mathbf{i^{\K}_{\tau+1}})]$:
\begin{align*}
    \ln\mathbb{P}[\mathbf{i}^{\K}_{\tau+1} |\ell_{-\K}, \mathbf{b}_{\tau}] = \frac{\mathbb{E}_{\mathbf{I}^{\K}}\left[\mathds{1}_{\{\mathbf{I}^{\K}\}}(\mathbf{i}^{\K}_{\tau+1})\ln\mathbb{P}[\mathbf{I}^{\K} \mid \ell_{-\K}, \mathbf{b}_{\tau}]\right]}{\mathbb{P}[\mathbf{i}_{\tau+1}^{\K} \mid \ell_{-\K}, \mathbf{b}_{\tau}]}.
\end{align*}
This new expression rewrites $X_\tau$ as
\begin{equation}
\label{eq:x_tau_1}
X_{\tau}=\frac{\E_{\mathbf{I}^{\K}}\left[\kappa \left(\mathds{1}_{\{\mathbf{I}^{\K}\}}(\mathbf{i}^{\K}_{\tau+1}) - \mathbb{P}[\mathbf{i}_{\tau+1}^{\K} | \ell_{-\K}, \mathbf{b}_{\tau}]\right) \right]}{\mathbb{P}[\mathbf{i}^{\K}_{\tau+1}\mid \ell_{-\K}, \mathbf{b}_{\tau}]},
\end{equation}
where $\kappa\triangleq \ln\mathbb{P}[\mathbf{I}^{\K} | \ell_{-\K}, \mathbf{b}_{\tau}]$.

For realizable histories, $\mathbb{P}[\mathbf{i}_{\tau+1}^{\K} | \ell_{-\K}, \mathbf{b}_{\tau}] \in (0,1]$, and thus, it is safe to ignore the denominator in \eqref{eq:x_tau_1} and only consider the upper-bound of the numerator.  Applying the Cauchy-Schwartz inequality to the numerator, we obtain
\begin{equation*}
    X_\tau^2 \leq \mathbb{E}_{\mathbf{I}^{\K}}\Big[\kappa^2 \big(\underbrace{\mathds{1}_{\{\mathbf{I}^{\K}\}}(\mathbf{i}^{\K}_{\tau}+1) - \mathbb{P}[\mathbf{i}_{\tau+1}^{\K} | \ell_{-\K}, \mathbf{b}_{\tau}]}_{\triangleq \chi}\big)^2 \Big].
\end{equation*}
Since $\kappa \geq 0$ and $\chi \in [0,1]$, $X^2_{\tau} \leq \mathbb{E}_{\mathbf{I}^{\K}}\left[\kappa^2 \chi^2\right] \leq \mathbb{E}_{\mathbf{I}^{\K}}\left[\kappa^2 \right]$. Since the mapping $x\mapsto x(\ln x)^2$ is bounded by 1 for $x\in (0,1]$\footnote{we use the standard convention that $0(\ln 0)^2=0$}, $\E_{\mathbf{I}^{\K}}[\kappa^2]\leq 1$, which proves that $\E[X_\tau^2]\leq 1$ and asserts that $(X_\tau)_{\tau\geq 1}$ is a \textsc{mds}.

Define a martingale $(Y_t)_{t\geq 1}$, $Y_{t}\triangleq \sum_{\tau=1}^t X_\tau/\tau$. By the martingale convergence theorem, $(Y_{\tau})_{\tau \geq 1}$ converges to a finite and integrable random variable a.s.-$\mathbb{P}^{\mathscr{B}, \mathscr{R}}$ \cite[Thm. 6.4.3]{Ash:2000uj}. This convergence enables us to invoke Kronecker's lemma, stating that $\lim_{t \rightarrow \infty}t^{-1}\sum_{\tau=1}^tX_{\tau}=0$ a.s.-$\mathbb{P}^{\mathscr{R}}$ \cite[pp. 105]{pollard_2001}. Consequently, the following difference converges to zero a.s.-$\mathbb{P}^{\mathscr{B}, \mathscr{R}}$ as $t \rightarrow \infty$
\begin{align*}
\bigg| \frac{1}{t}\sum_{\tau=1}^t\frac{\ln\mathbb{P}[\mathbf{i}^{\K}_{\tau+1} | \ell_{-\K}, \mathbf{b}_{\tau}]}{t} - \mathbb{E}_{\mathbf{b} \sim \nu_t}\mathbb{E}_{\mathbf{I}^{\K}}\left[\ln\mathbb{P}[\mathbf{I}^{\K} | \ell_{-\K}, \mathbf{b}]\right]\bigg|.
\end{align*}
\end{proof}

With \Cref{app-lem:mds}, we unfold the complete proof of \Cref{thm:asym-consistency} as below.

\begin{proof}
   Following the proof sketch in the main text, by recursively applying the Bayes rule \eqref{eq:bayesian-learning}, we obtain that for $\widehat{\ell}_{-\K}\in \Theta_{\K}$ ($\overline{\ell}_{-\K}$ denotes a generic candidate in $\Theta_{\K}$),
   \begin{align*}
       \mu_{t+1}^{\K}(\widehat{\ell}_{-\K})=\frac{\mu^{\K}_1(\widehat{\ell}_{-\K})\exp\left(-tZ_{t+1}(\widehat{\ell}_{-\K})\right)}{\sum_{\Theta_{\K}}\mu^{\K}_1(\overline{\ell}_{-\K})\exp\left(-tZ_{t+1}(\overline{\ell}_{-\K})\right)},
   \end{align*}
   where $Z_{t+1}$ is defined in \Cref{app-lem:mds}. The expectation in \eqref{eq:asym-consistency} can then be rewritten as
   \begin{align}
       &\sum_{\widehat{\ell}_{-\K}\in\Theta_\K} \overbrace{(K(\widehat{\ell}_{-\K}, \bm{\nu}_{\mathbf{h}_t})-K^\star_{\Theta_\K}(\bm{\nu}_{\mathbf{h}_t}))}^{\triangleq \Delta K(\widehat{\ell}_{-\K}, \bm{\nu}_{\mathbf{h}_t})}\mu^\K_{t+1}(\widehat{\ell}_{-\K})\nonumber  \\
       =&\frac{\sum_{\widehat{\ell}_{-\K}}\Delta K(\widehat{\ell}_{-\K}, \bm{\nu}_{\mathbf{h}_t})\mu^{\K}_1(\widehat{\ell}_{-\K})\exp\left(-tZ_{t+1}(\widehat{\ell}_{-\K})\right)}{\sum_{\overline{\ell}_{-\K}}\mu^{\K}_1(\overline{\ell}_{-\K})\exp\left(-tZ_{t+1}(\overline{\ell}_{-\K})\right)} \nonumber\\
       =&\frac{\sum_{\widehat{\ell}_{-\K}}\Delta K(\widehat{\ell}_{-\K}, \bm{\nu}_{\mathbf{h}_t})\mu^{\K}_1(\widehat{\ell}_{-\K})\exp\left(-t\left(Z_{t+1}(\widehat{\ell}_{-\K})-K_{\Theta_{\K}}^\star(\bm{\nu}_{\mathbf{h}_t})\right)\right)}{\sum_{\overline{\ell}_{-\K}}\underbrace{\mu^{\K}_1(\overline{\ell}_{-\K})\exp\left(-t\left(Z_{t+1}(\overline{\ell}_{-\K})-K_{\Theta_{\K}}^\star(\bm{\nu}_{\mathbf{h}_t})\right)\right)}_{\triangleq \sigma_{t+1}(\overline{\ell}_{-\K})}},
       \label{eq:expected-kl}
   \end{align}
where the second equality is obtained by multiplying the numerator and denominator by $\exp(tK_{\Theta_{\K}}^\star(\bm{\nu}_{\mathbf{h}_t}))$.

   For an arbitrarily small $\epsilon>0$, $\Theta_{\K}^\epsilon\triangleq \{\widehat{\ell}|\Delta K(\widehat{\ell}_{-\K}, \bm{\nu}_{\mathbf{h}_t}) \geq \epsilon \}$. The following inequalities hold for the fraction in \eqref{eq:expected-kl}:
   \begin{align}
      &\frac{\sum_{\widehat{\ell}_{-\K}\in \Theta_{\K}}\Delta K(\widehat{\ell}_{-\K}, \bm{\nu}_{\mathbf{h}_t})\sigma_{t+1}(\widehat{\ell}_{-\K})}{\sum_{\overline{\ell}_{-\K}\in \Theta_{\K}}\sigma_{t+1}(\overline{\ell}_{-\K})}\nonumber\\
      =& \frac{\left(\sum_{\Theta_{\K}^{\epsilon}}+\sum_{\Theta_{\K}\setminus \Theta_{\K}^{\epsilon}}\right)\Delta K(\widehat{\ell}_{-\K}, \bm{\nu}_{\mathbf{h}_t})\sigma_{t+1}(\widehat{\ell}_{-\K})}{\sum_{\overline{\ell}_{-\K}}\sigma_{t+1}(\overline{\ell}_{-\K})}\nonumber\\
      \leq & \epsilon + \frac{\sum_{\widehat{\ell}_{-\K}\in \Theta_{\K}^{\epsilon}}\Delta K(\widehat{\ell}_{-\K}, \bm{\nu}_{\mathbf{h}_t})\sigma_{t+1}(\widehat{\ell}_{-\K})}{\sum_{\overline{\ell}_{-\K}\in \Theta_{\K}}\sigma_{t+1}(\overline{\ell}_{-\K})} \nonumber.
   \end{align}
   Hence, it suffices to prove that
   \begin{align}
   \label{eq:upper-bound convergence}
       \lim_{t\rightarrow \infty}\frac{\sum_{\widehat{\ell}_{-\K}\in \Theta_{\K}^{\epsilon}}\Delta K(\widehat{\ell}_{-\K}, \bm{\nu}_{\mathbf{h}_t})\sigma_{t+1}(\widehat{\ell}_{-\K})}{\sum_{\overline{\ell}_{-\K}\in \Theta_{\K}}\sigma_{t+1}(\overline{\ell}_{-\K})}= 0
   \end{align}

   Towards the proof of \eqref{eq:upper-bound convergence}, we rewrite  the exponent in $\sigma_{t+1}(\overline{\ell}_{-\K})$ as
   \begin{align}
&-t(Z_{t+1}(\overline{\ell}_{-\K})-K^{\star}_{\Theta_{\K}}(\bm{\nu}_{\mathbf{h}_t}))=\nonumber\\
&-t(Z_{t+1}(\overline{\ell}_{-\K})-K^{\star}_{\Theta_{\K}}(\bm{\nu}_{\mathbf{h}_t})) + K(\overline{\ell}_{-\K},\bm{\nu}_{\mathbf{h}_t}) -K(\overline{\ell}_{-\K},\bm{\nu}_{\mathbf{h}_t})\nonumber\\
&=-t(\Delta K(\overline{\ell}_{-\K}, \bm{\nu}_{\mathbf{h}_t}) + Z_{t+1}(\overline{\ell}_{-\K}) -K(\overline{\ell}_{-\K},\bm{\nu}_{\mathbf{h}_t})). \label{eq:numerator_wrangling_thm_4}
   \end{align}
Recall from \Cref{app-lem:mds} that for any $\epsilon>0$, there exists $\eta>0$ and $t_{\eta}\geq 1$ such that $|Z_{t+1}(\overline{\ell}_{-\K})-K(\overline{\ell}_{-\K},\bm{\nu}_{\mathbf{h}_t} )|< \eta $ for all $t\geq t_{\eta}$ and $\overline{\ell}_{-\K}\in \Theta_{\K}$. Such a $t_{\eta}$ is uniform since $\Theta_{\K}$ is  finite (\Cref{assumption:bayes}).  Consequently, the fraction in \eqref{eq:upper-bound convergence} satisfies the following inequality for all $t\geq t_{\eta}$:
\begin{align}
    &\frac{\sum_{\widehat{\ell}_{-\K}\in \Theta_{\K}^{\epsilon}}\Delta K(\widehat{\ell}_{-\K}, \bm{\nu}_{\mathbf{h}_t})\sigma_{t+1}(\widehat{\ell}_{-\K})}{\sum_{\overline{\ell}_{-\K}\in \Theta_{\K}}\sigma_{t+1}(\overline{\ell}_{-\K})}\tag{$\ast$} \\
    \leq & \frac{\sum_{\Theta_{\K}^\epsilon}\Delta K(\widehat{\ell}_{-\K}, \bm{\nu}_{\mathbf{h}_t}) \mu^{\K}_1(\widehat{\ell}_{-\K}) \exp\left(-t(\Delta K(\widehat{\ell}_{-\K}, \bm{\nu}_{\mathbf{h}_t}) -\eta)\right) }{\sum_{\Theta_{\K}}\mu_1^{\K}(\overline{\ell}_{-\K})\exp\left(-t(\Delta K(\overline{\ell}_{-\K}, \bm{\nu}_{\mathbf{h}_t}) +\eta)\right)   }\nonumber\\
    = & e^{2t\eta} \frac{\sum_{\Theta_{\K}^\epsilon}\mu^{\K}_1(\widehat{\ell}_{-\K}) \Delta K(\widehat{\ell}_{-\K}, \bm{\nu}_{\mathbf{h}_t})  \exp\left(-t\Delta K(\widehat{\ell}_{-\K}, \bm{\nu}_{\mathbf{h}_t})\right)}{\sum_{\Theta_{\K}}\mu_1^{\K}(\overline{\ell}_{-\K})\exp\left(-t\Delta K(\overline{\ell}_{-\K}, \bm{\nu}_{\mathbf{h}_t}) \right) }.
    \label{eq:epsilon-eta}
\end{align}

Consider the numerator in \eqref{eq:epsilon-eta}. Note that $x e^{-tx}$ is decreasing for all $x> t^{-1}$ and $\Delta K(\widehat{\ell}_{-\K}, \bm{\nu}_{\mathbf{h}_t})\geq \epsilon$ for all $\widehat{\ell}\in \Theta_{\K}^\epsilon$. Hence, for any $t\geq \max[t_{\eta}, \epsilon^{-1}]$, the numerator in \eqref{eq:epsilon-eta} is upper-bounded by $\epsilon e^{-t\epsilon}$. As for the denominator, we claim that it is uniformly lower bound by a positive constant. To see this, we first recall that $\Theta_{\K}^\star(\bm{\nu}_{\mathbf{h}_t})$ is always non-empty since $\Theta_{\K}$ is finite. Thus, we have
\begin{align*}
    &\sum_{\Theta_{\K}}\mu_1^{\K}(\overline{\ell}_{-\K})\exp\left(-t\Delta K(\overline{\ell}_{-\K}, \bm{\nu}_{\mathbf{h}_t}) \right)  \\
    \geq & \sum_{\Theta_{\K}^\star(\bm{\nu}_{\mathbf{h}_t})}\mu_1^{\K}(\overline{\ell}_{-\K})\exp\left(-t\Delta K(\overline{\ell}_{-\K}, \bm{\nu}_{\mathbf{h}_t}) \right)\\
    =& \sum_{\Theta_{\K}^\star(\bm{\nu}_{\mathbf{h}_t})}\mu_1^{\K}(\overline{\ell}_{-\K})> \min_{\overline{\ell}_{-\K}\in \Theta_{\K}^\star(\bm{\nu}_{\mathbf{h}_t})} \mu_{1}^\K(\overline{\ell}_{-\K})\geq \min_{\overline{\ell}_{-\K}\in \Theta_{\K}} \mu_{1}^\K(\overline{\ell}_{-\K}),
\end{align*}
where the equality follows $\Delta K(\overline{\ell}_{-\K}, \bm{\nu}_{\mathbf{h}_t})=0$, for $\overline{\ell}_{-\K}\in \Theta_{\K}^\star$. Since $\mu_1^\K$ has full support by \Cref{assumption:bayes},  $\min_{\overline{\ell}_{-\K}\in \Theta_{\K}} \mu_{1}^\K(\overline{\ell}_{-\K})$ is a strictly positive constant, denoted by $k$. Then,  $\mathrm{(\ast)} \leq e^{2t\eta}\epsilon e^{-t\epsilon}k^{-1}$. Let $\eta=\frac{\epsilon}{4}$. Then $e^{2t\eta}\epsilon e^{-t\epsilon}k^{-1}=e^{\frac{-t\epsilon}{2}}\epsilon k^{-1}$, which converges to $0$ as $t \rightarrow \infty$. This completes the proof of \eqref{eq:upper-bound convergence}, and hence, \eqref{eq:asym-consistency} holds.
\end{proof}

\subsection{Full Proof of \Cref{coro:convergence-berk-nash}}
\begin{proof}
Due to the i.i.d. states, the belief state $\mathbf{b}_t^{\K}$ is also i.i.d., and hence, the empirical occupancy measure converges almost surely by the Glivenko-Cantelli theorem. Denote by $\bm{\nu}$ the limit point of the joint belief occupancy. According to \Cref{thm:asym-consistency}, $\mu_t^{\K}$ asymptotically concentrates on $\Theta_{\K}^\star(\bm{\nu})$, which, however,  does not necessarily imply convergence of $(\mu_t^{\K})_{t\geq 1}$. Thanks to the sequential compactness of $\Delta(\Theta_{\K}^\star)$, there exists a convergent subsequence of $(\mu_t^{\K})_{t\geq 1}$ whose limit point is denoted by $\mu_{\infty}^{\K}\in \Delta(\Theta_{\K}^\star)$. The rest is to show that the limit point of the empirical strategy profile, which is denoted by $\pi_{\K}$, satisfies (i). Suppose by contradiction that $\pi_{\K}(\cdot|o^{\K})\notin \argmin\E_{\mathbb{P}[\cdot\mid \mu^{\K}_{\infty}, \mathbf{b}^{\K}]}[c^{\K}(S, A^{\K}, A^{-\K})]$ for some $o^{\K}$. Then, there exists a $T>0$ such that there is at least one subsequence of $(\pi_{\K,t})_{t\geq T}$ by Alg.~\ref{alg:online_rollout} (denoted also by $\pi_{\K,t}$), satisfying $\pi_{\K,t}(\cdot|o^{\K})\notin \argmin\E_{\mathbb{P}[\cdot\mid \mu^{\K}_{\infty}, \mathbf{b}^{\K}]}[c^{\K}(S, A^{\K}, A^{-\K})]$. Let $g(\pi_{\K},\mu)\triangleq \E_{\mathbb{P}[\cdot\mid \mu, \mathbf{b}^{\K}]}[c^{\K}(S, A^{\K}, A^{-\K})]$, a bilinear function defined over a compact set $\Delta(\mathcal{\mathcal{A}^{\K}})\times \Delta(\Theta_{\K})$, and we write $g_{\mu}(\pi_{\K})$ when fixing the second variable. Due to its bilinearity, we obtain that 1) the level set of $g_{\mu_t}$ is bounded (the entire domain is bounded) and 2) $g_{\mu_t}$ pointwise converges to $g_{\mu_{\infty}}$, i.e., for any $\pi_{t}\rightarrow \pi$, $g_{\mu_t}(\pi_t)\rightarrow g_{\mu_{\infty}}(\pi)$. Hence, the epigraph of $g_{\mu_t}$ converges to $g_{\mu_{\infty}}$'s in the Painlev\'e-Kuratowski sense\cite[Thm 7.11]{RockWets98}. Invoking \cite[Thm 7.33]{RockWets98}, we have  $\limsup_{t} \argmin g_{\mu_t}\subset \argmin g_{\mu_{\infty}}$. However, $\pi_{\K, t}$ cannot be simultaneously included in $\argmin g_{\mu_t}$ by the conjectural best response while excluded by the left-hand side, which is derived from the reductio hypothesis. Therefore, $\pi_{\K}$ satisfies the optimality condition.
\end{proof}
\subsection{Experiment Setup}
\label{app:IT-config}
\setcounter{table}{0}
\renewcommand{\thetable}{\Alph{subsection}.\arabic{table}}
The IT infrastructure configuration includes 64 servers and an Intrusion Detection System (\textsc{ids}) that logs events in real-time \cite{tifs_hlsz_supplementary}. Some of the serves are vulnerable to \textsc{apt}s, and we emulate the \textsc{apt} attacker using the actions listed in \Cref{tab:attacker_actions}. The complete infrastructure configuration can be found at \cite[Appendix G]{tifs_hlsz_supplementary}. Clients access the services through a public gateway, which is also open to the attacker. The defender monitors the infrastructure by observing \textsc{ids} alerts. The partial observation $o_t$ corresponds to the priority-weighted sum of the number of \textsc{ids} alerts at time $t$. The partial observation kernel $z$ is estimated through the empirical distribution using $10^5$ measurements of $o_t$ from a digital twin of the IT infrastructure. The measurements are available at \cite{csle_source_code}, where $|\mathcal{O}|=26178$. We remark that such a large observation space leads to a prolonged offline computation for state-of-the-art methods, such as heuristic search value iteration \cite{horak23posg} and proximal policy optimization (\textsc{ppo}) \cite{ppo}. In stark contrast, the proposed \textsc{col} can proceed in real time.
\begin{table}[!h]
\centering
\begin{tabular}{ll} \toprule
  {\textit{Type}} & {\textit{Actions}} \\ \midrule
  Reconnaissance  & \textsc{tcpp} \textsc{syn} scan, \textsc{udp} port scan, \textsc{tcpp} \textsc{xmas} scan\\
                  & \textsc{vulscan} vulnerability scanner, ping-scan \\
  &\\
  Brute-force & \textsc{telnet}, \textsc{ssh}, \textsc{ftp}, \textsc{cassandra},\\
                  &  \textsc{irc}, \textsc{mongo}, \textsc{mysql}, \textsc{smtp}, \textsc{postgres}\\
                  &\\
  Exploit & \textsc{cve}-2017-7494, \textsc{cve}-2015-3306,\\
                  & \textsc{cve}-2010-0426, \textsc{cve}-2015-5602,\textsc{cve}-2015-1427 \\
                  &  \textsc{cve}-2014-6271, \textsc{cve}-2016-10033\\
                  & \textsc{cwe}-89 weakness on the web app \textsc{dvwa}\\
  \bottomrule\\
\end{tabular}
\caption{Attacker actions in simulation; when the attacker takes the stop action ($\mathsf{S}$), a randomly selected action is applied to every reachable server; actions are identified by the vulnerability identifiers in the Common Vulnerabilities and Exposures (\textsc{cve}) database and the Common Weakness Enumeration (\textsc{cwe}) list available at \url{https://cve.mitre.org/}; the web app \textsc{dvwa} is available at \url{https://github.com/digininja/DVWA}.}\label{tab:attacker_actions}
\end{table}

The hyperparameters of \textsc{ppo} baselines follow \cite[Table 5]{tifs_hlsz_supplementary} and are listed in \Cref{tab:ppo-hyper}, which are obtained from grid search.
\begin{table}[!h]
\centering
\begin{tabular}{ll}
\toprule
   learning rate  & $10^{-5}$ \\
   Adam stepsize  & $4\times 10^{-3}$\\
   batch size     & 64\\
   \# layers      & 4 \\
   \# neurons     & 64 \\
   clip           & $0.2$\\
   GAE            & $0.95$\\
   entropy coeff  & $10^{-4}$\\
   activation     & ReLU\\
\bottomrule
\end{tabular}
\caption{The hyperparameters of \textsc{ppo} baselines in Fig.~\ref{fig:evaluation_results}.}
\label{tab:ppo-hyper}
\end{table}

\end{appendix}

\end{document}